	\def\doi#1{\url{https://doi.org/#1}}}
\newcommand{\mnev}{Mn\"{e}v\xspace}
\newcommand{\ER}{\ensuremath{\exists\mathbb{R}}\xspace}
\newcommand{\Poly}{\ensuremath{\mathsf{P}}\xspace}
\newcommand{\NP}{\ensuremath{\mathsf{NP}}\xspace}
\newcommand{\ETR}{\textsf{ETR}\xspace}
\newcommand{\curved}{\text{bi-curved}\xspace}
\newcommand{\Curved}{\text{Bi-curved}\xspace}
\newcommand{\separable}{dif\-fer\-ence-sep\-a\-ra\-ble\xspace}
\newcommand{\Separable}{Dif\-fer\-ence-sep\-a\-ra\-ble\xspace}
\newcommand{\representable}{\text{computable}\xspace}
\newcommand{\Representable}{\text{Computable}\xspace}
\newcommand{\R}{\mathbb{R}}
\newcommand{\A}{\ensuremath{\mathcal{A}}\xspace}
\newcommand{\B}{\ensuremath{\mathcal{B}}\xspace}
\newcommand{\F}{\ensuremath{\mathcal{F}}\xspace}
\newcommand{\stretchability}{\textsc{Stretcha\-bility}\xspace}
\newcommand{\stretcha}{\textsc{$d$\nobreakdash-Stretcha\-bility}\xspace}
\newcommand{\PPHA}{PPHA\xspace}
\newcommand{\PH}{pseudo\-hyper\-plane\xspace}
\newcommand{\Recognition}{\textsc{Recog\-ni\-tion}\xspace}
\def\arxiv{}  %% arxiv
\begin{document}

\title{The Complexity of Recognizing Geometric Hypergraphs%
\ifdefined\arxiv
\else
\thanks{The full version of this paper can be found on arXiv: \href{https://arxiv.org/abs/2302.13597v2}{2302.13597v2}.}
\fi
}
%\titlerunning{Abbreviated paper title}

\author{Daniel~Bertschinger\inst{1} \and
Nicolas~El~Maalouly\inst{1}%
\ifdefined\arxiv
\else
\orcidID{0000-0002-1037-0203}
\fi
\and
Linda~Kleist\inst{2}%
\ifdefined\arxiv
\else
\orcidID{0000-0002-3786-916X}
\fi
\and
Tillmann~Miltzow\inst{3}%
\ifdefined\arxiv
\else
\orcidID{0000-0003-4563-2864}
\fi
\and
Simon~Weber\inst{1}%
\ifdefined\arxiv
\else
\orcidID{0000-0003-1901-3621}
\fi
}

\authorrunning{D. Bertschinger et al.}
% First names are abbreviated in the running head.
% If there are more than two authors, 'et al.' is used.
%
\institute{Department of Computer Science, ETH Zürich \\ \email{\{daniel.bertschinger,nicolas.elmaalouly,simon.weber\}@inf.ethz.ch}
\and
Department of Computer Science, TU Braunschweig\\
\email{kleist@ibr.cs.tu-bs.de}\\
\and
Department of Information and Computing Sciences, Utrecht University\\
\email{t.miltzow@uu.nl}}
\maketitle              % typeset the header of the contribution
\begin{abstract}
As set systems, hypergraphs are omnipresent and have various representations ranging from Euler and Venn diagrams to contact representations.
In a \emph{geometric representation} of a hypergraph $H=(V,E)$, each vertex $v\in V$ is associated with a point $p_v\in \R^d$ and each hyperedge $e\in E$ is associated with a connected set $s_e\subset \R^d$ such that  $\{p_v\mid v\in V\}\cap s_e=\{p_v\mid v\in e\}$ for all $e\in E$. 
We say that a given hypergraph $H$ is \textit{representable} by some (infinite) family $\F$ of sets in ${\R^d}$, if there exist $P\subset \R^d$ and $S \subseteq \F$ such that $(P,S)$ is a geometric representation of~$H$.
For a family \F, we define \Recognition{}(\F) as the problem to determine if a given hypergraph is representable by \F.
It is known that the \Recognition{} problem is \ER-hard for halfspaces in $\R^d$.
We study the families of translates of balls and ellipsoids in $\R^d$, as well as of other convex sets, and show that their \Recognition{} problems are also \ER-complete. This means that these recognition problems are equivalent to deciding
whether a multivariate system of polynomial equations with integer coefficients has a real solution.

\keywords{Hypergraph \and geometric hypergraph \and recognition \and computational complexity \and convex \and ball \and ellipsoid \and halfplane \and halfspace.}
\end{abstract}
%
%
%
%\newpage
\section{Introduction}\label{sec:introduction}

As set systems, hypergraphs appear in various contexts, such as databases, clustering, and machine learning.
%\new{ networks of business partners or co-authorship networks}
They are also known as \emph{range spaces} (in computational geometry) or \emph{voting games} (in social choice theory). A hypergraph can be represented in various ways, e.g., by a bipartite incidence graph, a simplicial representation (if the set system is closed under taking subsets), Euler or Venn diagrams etc.
%
%hyper graph drawing
%subdivision model: the plane is subdivided into regions, each of which represents a single vertex of the hypergraph. The hyperedges of the hypergraph are represented by contiguous subsets of these regions, which may be indicated by coloring, by drawing outlines around them, or both. 
%
%An order-n Venn diagram, for instance, may be viewed as a subdivision drawing of a hypergraph with n hyperedges (the curves defining the diagram) and 2n − 1 vertices (represented by the regions into which these curves subdivide the plane). In contrast with the polynomial-time recognition of planar graphs, it is NP-complete to determine whether a hypergraph has a planar subdivision drawing,[24] but the existence of a drawing of this type may be tested efficiently when the adjacency pattern of the regions is constrained to be a path, cycle, or tree.
%
% PAOH[1]: Edges are vertical lines connecting vertices. Vertices are aligned on the left. The legend on the right shows the names of the edges. It has been designed for dynamic hypergraphs but can be used for simple hypergraphs as well. 
%\linda{ das ist doch einfach ne geometric representation in 1D, oder?}
%
Similar to classical graph drawing,
%As a generalization of graphs, 
one can represent vertices by points and hyperedges by connected sets in $\mathbb R^d$ such that each set contains exactly the points of a hyperedge.
%\new{(often just the boundary of a set is depicted as a closed curve that encloses the set of points)}
For the purposes of legibility, uniformity, or also for aesthetic reasons, it is desirable that these sets satisfy additional properties, e.g., being convex or having similar appearance such as being homothetic copies or even translates of each other. 

For an introductory example, suppose we are organizing a conference and have a list of accepted talks.
Clearly, each participant wants to quickly identify talks of their specific interest.
In order to create a good overview, we want to find a good representation. To this end, we label each talk by several tags, e.g., \texttt{hypergraphs}, \texttt{complexity theory}, \texttt{planar graphs}, \texttt{beyond planarity},   \texttt{straight-line drawing}, \texttt{crossing numbers}, etc.
Then, we create a representation, where each tag is represented by a unit disk (or another nice geometric object of our choice) containing points representing the talks that have this tag, see \Cref{fig:Geometric-Hypergraph} for an example.
In other words, we are interested in a geometric representation of the hypergraph where the vertex set is given by the talks and tags define the hyperedges. 
%These representations correspond to Euler diagrams where the sets are represented by a unit disk (or another nice geometric object of our choice); however we allow zones to be empty??. \linda{name euler diagrams? double check}\simon{I thought the point of euler diagrams was to explicitly not show empty cells. I would not lean on this term too much} \linda{maybe we could call it weak (geometric) euler diagram?}

\begin{figure}[htb]
	\centering
	\includegraphics[page=3,scale=1]{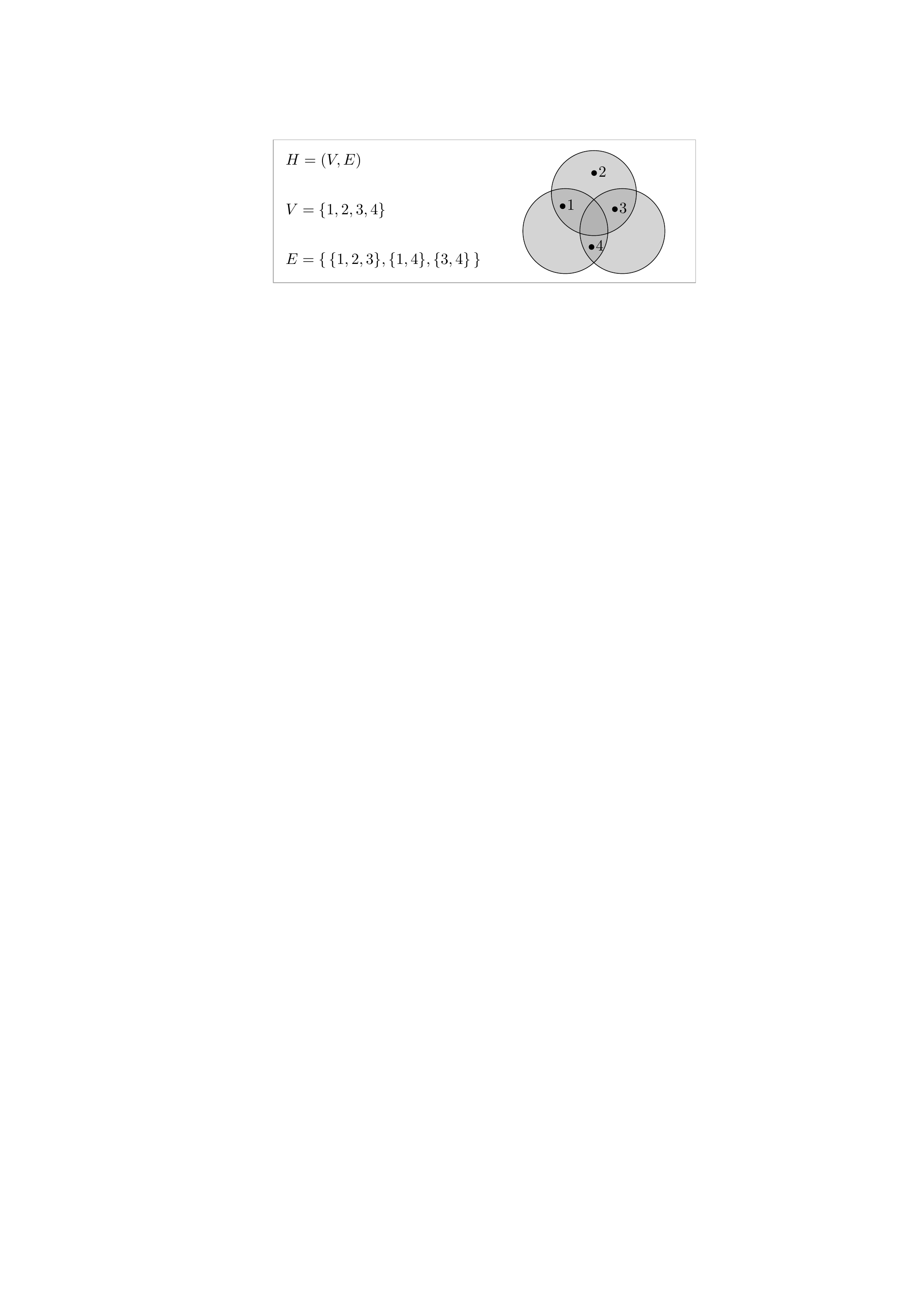}
	\caption{A geometric representation with unit disks of the abstract hypergraph~$H=(V,E)$ with $V=[8]$ and $E=\{\{1,2,3\},\{3,4,5,6\},\{5,6,7\},\{6,7,8\}\}$.}
	\label{fig:Geometric-Hypergraph}
\end{figure}

In this work, we investigate the complexity of deciding whether a given hypergraph has such a geometric representation.
We start with a formal definition.

\paragraph{Problem Definition.}
In a \emph{geometric representation} of a hypergraph $H=(V,E)$, each vertex $v\in V$ is associated with a point $p_v\in \R^d$ and each hyperedge $e\in E$ is associated with a connected set $s_e\subset \R^d$ such that  $\{p_v\mid v\in V\}\cap s_e=\{p_v\mid v\in e\}$ for all $e\in E$. 
We say that a given hypergraph $H$ is \textit{representable} by some (possibly infinite) family $\F$ of sets in ${\R^d}$, if there exist $P\subset \R^d$ and $S \subseteq \F$ such that $(P,S)$ is a geometric representation of~$H$.
For a family \F of geometric objects in $\mathbb R^d$, we define \Recognition{}(\F) as the problem to determine whether a given hypergraph is representable by \F. 
Next, we give some definitions describing
the geometric families studied in this work.

\paragraph{\Curved, \Separable, and \Representable  Convex Sets.}
We study convex sets that are \curved, \separable and \representable.
While the first two properties are needed for \ER-hardness,
the last one is used to show \ER-membership. 

Let $C\subset \R^d$ be a  convex set.
We call $C$ \textit{\representable} if
for any point $p\in \R^d$ we can decide in polynomial time on a real~RAM whether $p$ is contained 
in $C$.
We say that $C$ is \textit{\curved}
if there exists a unit vector $v\in\R^d$, such that there are two distinct tangent hyperplanes on $C$ with normal vector $v$;
with each of these hyperplanes intersecting $C$ in a single point, and $C$ being \emph{smooth} at both of these intersection points.
Informally, a convex set is \curved, if its boundary has two smoothly curved parts in which the tangent hyperplanes are parallel.
Note that a convex, \curved set is necessarily bounded.
As a matter of fact,  any strictly convex bounded set in any dimension is \curved. For such sets, any unit vector $v$ fulfills the conditions. As can be seen in \Cref{fig:niceShapeA}, being strictly convex is not necessary for being \curved.

\begin{figure}[htb]
	\centering
	\begin{subfigure}[t]{.32\textwidth}
	\centering
	\includegraphics[page=13]{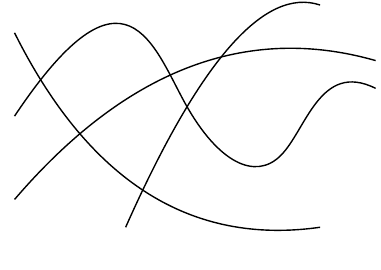}
	\caption{This burger-like set is \curved as shown by the two tangent hyper\-planes.
	}
	\label{fig:niceShapeA}
	\end{subfigure}\hfill
\begin{subfigure}[t]{.33\textwidth}
	\centering
	\includegraphics[page=14]{figures/halfspaces.pdf}
	\caption{A hyperplane separating the symmetric difference of two translates of the burger-like set. 
	}
	\label{fig:niceShapeB}
\end{subfigure}\hfill
\begin{subfigure}[t]{.32\textwidth}
	\centering
	\includegraphics[page=15]{figures/halfspaces.pdf}
	\caption{Two cubes in $\R^3$ whose symmetric difference cannot be separated by a plane.
	}
	\label{fig:niceShapeC}
\end{subfigure}
	\caption{Illustration for the notions \curved and \separable.}
\label{fig:niceShape}
\end{figure}

We call  $C$ \textit{\separable} if
for any two translates $C_1,C_2$ of $C$, there exists a hyperplane which strictly separates $C_1\setminus C_2$ from $C_2\setminus C_1$.
Being \separable is fulfilled by any convex set in $\R^2$, see \Cref{fig:niceShapeB} for an example. For a proof of this fact we refer to \cite[Corollary 2.1.2.2]{ma2000phd}. However, in higher dimensions this is not the case: 
for a counterexample, consider two $3$-cubes 
as in \Cref{fig:niceShapeC}. 
In higher dimensions, the \curved and \separable families include the balls and ellipsoids. 
We are not aware of other natural geometric families with those two properties. 
Note that balls and ellipsoids are naturally \representable.

We are now ready to state our results.

\subsection{Results}
We study the
recognition problem of geometric hypergraphs.
We first consider the maybe simplest type of 
geometric hypergraphs, namely those that stem from 
halfspaces.
It is known due to Tanenbaum, Good\-rich, and Scheinerman~\cite{tanenbaum1995Halfspaces} that the
\Recognition{} problem for geometric hypergraphs of halfspaces is \NP-hard, but their proof actually implies \ER-hardness as well. We present a slightly different proof of this fact due to two reasons. 
Firstly, their proof lacks details
about extensions to higher dimensions. 
Secondly, it is a good stepping stone towards our proof of \Cref{thm:translates}.

\begin{restatable}[Tanenbaum, Goodrich, Scheinerman~\cite{tanenbaum1995Halfspaces}]{theorem}{halfspaces}\label{thm:halfspaces}
	For every \mbox{$d\geq 2$}, \Recognition{}(\F) is \ER-complete for the family~\F of halfspaces in $\R^d$.
\end{restatable}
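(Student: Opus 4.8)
\emph{The plan} is to establish \ER-membership and \ER-hardness separately, for all $d\ge 2$.

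\emph{Membership in \ER.} I would encode the existence of a representation as an existential sentence over the reals. For each vertex $v$ introduce $d$ real variables for the coordinates of $p_v$, and for each hyperedge $e$ introduce $d+1$ real variables $(a_e,b_e)\in\R^d\times\R$ describing the halfspace $s_e=\{x\in\R^d : \langle a_e,x\rangle\le b_e\}$. The defining condition $\{p_v\mid v\in V\}\cap s_e=\{p_v\mid v\in e\}$ is then captured by the constraints $\langle a_e,p_v\rangle\le b_e$ for all $v\in e$ and $\langle a_e,p_v\rangle> b_e$ for all $v\notin e$. Each such constraint is a polynomial (in fact bilinear) inequality in the variables, so the conjunction over all $e$ and $v$ is an existential first-order formula over $\R$ of size polynomial in $|V|+|E|$. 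Hence \Recognition{}(\F)$\in\ER$.

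\emph{Hardness: reduction and reformulation.} For hardness I would reduce from \stretchability, which is \ER-complete already in the plane (via \mnev's universality theorem). The first step is a reformulation: a hypergraph $H=(V,E)$ is representable by halfspaces in $\R^d$ if and only if there is a point placement $\{p_v\}$ such that for every $e\in E$ the sets $\{p_v\mid v\in e\}$ and $\{p_v\mid v\notin e\}$ can be strictly separated by a hyperplane, i.e.\ every hyperedge is realized as a \emph{linearly separable subset} (a halfspace range) of the placement. The task thus becomes: given a prescribed family of subsets, realize a point set in which at least these subsets are separable. Since stretchability of a pseudoline arrangement is equivalent to realizability of the associated rank-$3$ oriented matroid, equivalently of an order type, it suffices to build from an order type a hypergraph $H$ that is representable iff the order type is realizable.

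\emph{The gadget.} The key mechanism is that positive separation constraints alone can pin down combinatorial structure. A singleton $\{p_v\}$ is a halfspace range exactly when $p_v$ is an extreme point; declaring every singleton of a block of \emph{frame} vertices a hyperedge therefore forces these vertices into convex position, and declaring the cyclically contiguous pairs as hyperedges forces a prescribed cyclic order (in convex position the halfspace ranges are precisely the contiguous arcs). Inside such a convex frame I place interior vertices encoding the points of the given order type and add hyperedges of the form (contiguous frame arc)$\,\cup\,$(subset of interior vertices): a narrow arc pins the separating hyperplane to a controlled direction, so the included interior vertices are forced to be exactly those on one side of a line of that direction. Choosing arcs and interior subsets according to the prescribed orientations forces every relevant triple of interior points to have the correct sign, so a representation exists iff the order type is realizable iff the arrangement is stretchable. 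For general $d\ge 2$ I would run the analogous construction, reducing from the realizability of rank-$(d+1)$ oriented matroids (pseudohyperplane arrangements), replacing the polygonal frame by a convex polytope frame whose facets supply the separating directions.

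\emph{Main obstacle.} The crux is faithfulness of the hard direction. A halfspace hypergraph can only \emph{enforce} that prescribed subsets are separable; it can never \emph{forbid} a separation. The gadgetry must therefore be arranged so that the imposed positive separations by themselves determine the entire order type, ruling out any \emph{spurious} realization of a different combinatorial type that nonetheless satisfies every hyperedge. Establishing this, keeping the number of frame and interior vertices (and of hyperedges) polynomial, and handling the robustness needed to translate between the pseudoline and the strict-halfspace settings, is where the real work lies.
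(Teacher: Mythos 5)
Your \ER-membership argument is exactly the paper's (the same bilinear existential formula), so that half is fine. The genuine gap is in the hardness half, and it is precisely the step you defer as ``where the real work lies'': you never establish soundness of the reduction, i.e.\ that \emph{every} point placement satisfying the prescribed separations realizes the intended order type --- and that is the entire content of the hard direction, not a technicality. Concretely, your arc gadget is shaky: a hyperedge of the form (frame arc)\,$\cup\,S$ only forces the existence of \emph{some} separating line whose direction is confined to an interval determined by the \emph{realized} (not prescribed) positions of the frame vertices, with exactly $S$ among the interior points on one side. But the orientation of a triple $(a,b,c)$ of interior points is governed by the line \emph{through} $a$ and $b$, and no halfspace-range constraint can force a separator to pass through given points; an adversarial realization may satisfy every prescribed range with slightly different separator directions while flipping triples you did not (and cannot exactly) pin. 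Making this work would amount to encoding enough of the allowable sequence to determine all triple signs and proving no spurious placement survives --- a substantial argument that the sketch does not contain. A secondary gap: for $d\geq 3$ you reduce from realizability of rank-$(d+1)$ oriented matroids, but the paper explicitly notes it is not aware of existing proofs that \mnev-type universality yields \ER-hardness for $d>2$; you would have to prove hardness of your source problem as well.

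The paper's route avoids your obstacle by working in the primal arrangement rather than with an order-type frame. It reduces from \stretcha for simple \PPHA{s} (proving that problem \ER-hard for all $d\geq 2$ by lifting a pseudoline arrangement orthogonally into a ``canvas'' in $\R^d$), duplicates each pseudohyperplane with a parallel twin, places one vertex in every $d$-dimensional cell, and defines one hyperedge per pseudohyperplane, containing all points on its twin's side --- so each pseudohyperplane contributes \emph{two} hyperedges whose union covers all vertices. Soundness then has a direct proof: near each vertex of the arrangement, the $3^d$ points incident to the corresponding $2^d$ intersections have pairwise distinct membership patterns in the $2d$ halfspaces, forcing $3^d$ distinct cells including a bounded central cell; a translation argument shows no other bounding hyperplane can cross that central cell, whence the bounding hyperplanes themselves form a stretching of the arrangement. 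This extraction of a stretching from \emph{any} representation is exactly the spuriousness-exclusion mechanism your proposal lacks.
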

Next we consider families of objects that are translates of a given object.

\begin{restatable}{theorem}{translates}\label{thm:translates}
	For $d\geq2$, let $C\subseteq \R^d$ be a convex, \curved, \separable and \representable set,
	and let \F be the family of all translates of $C$.
	Then \Recognition{}(\F) is \ER-complete.
\end{restatable}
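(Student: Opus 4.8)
The plan is to prove the two directions of \ER-completeness separately. Membership in \ER will use that $C$ is \representable, while \ER-hardness will follow by a reduction from the halfspace recognition problem of \Cref{thm:halfspaces}, exploiting that $C$ is \curved and \separable.

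\emph{Membership.} I would express the existence of a representation as a formula in the existential theory of the reals. For every vertex $v\in V$ introduce $d$ real variables for the coordinates of $p_v$, and, since every member of \F is a translate of $C$, introduce for every hyperedge $e\in E$ a further $d$ variables encoding the translation vector $t_e$ of $s_e=C+t_e$. The representation condition $\{p_v\mid v\in V\}\cap s_e=\{p_v\mid v\in e\}$ is then equivalent to asserting $p_v-t_e\in C$ for all $v\in e$ and $p_v-t_e\notin C$ for all $v\notin e$. Since $C$ is \representable, membership in $C$ (and its negation) is decidable in polynomial time on a real RAM, and a polynomial-time real-RAM decision can be turned into an existential real formula of polynomial size; conjoining these over all incidences yields a polynomial-size instance of \ETR that is satisfiable exactly when $H$ is representable by \F. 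Hence \Recognition(\F) is in \ER.

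\emph{Hardness.} Given a halfspace-recognition instance $H=(V,E)$, I map it to the hypergraph $H'$ obtained by adding, for every hyperedge $e\in E$, the complementary hyperedge $V\setminus e$. I then claim that $H'$ is representable by \F if and only if $H$ is representable by halfspaces, which together with \Cref{thm:halfspaces} gives \ER-hardness. For the direction from halfspaces to translates I use that $C$ is \curved: fix one of the two smooth tangent points $q$, where $\partial C$ is smooth and its tangent hyperplane with normal $v$ meets $C$ only in $q$. Starting from a halfspace representation of $H'$ (complements of halfspaces are halfspaces, so such a representation exists), I would shrink the point configuration into an arbitrarily small neighborhood of $q$ and realize each halfspace by the translate of $C$ whose boundary patch osculates the corresponding hyperplane, so that each translate contains exactly the prescribed points. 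For the converse I use that $C$ is \separable: given translates realizing $H'$, for each hyperedge $e$ the points of $e$ lie in $C+t_e$ but not in $C+t_{V\setminus e}$, while the points of $V\setminus e$ lie in $C+t_{V\setminus e}$ but not in $C+t_e$, so the hyperplane separating the two set differences guaranteed by \separable-ness bounds a halfspace containing exactly the points of $e$, giving a halfspace representation of $H$.

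The main obstacle is the \curved localization step and, above all, making both implications \emph{exact} rather than merely approximate. Near $q$ the boundary patch is almost orthogonal to $v$, so a translate of $C$ can only realize separating hyperplanes whose normals are close to $v$; to bring every hyperplane of the finite halfspace instance into this regime I would first apply a generic rotation so that no normal is orthogonal to $v$, and then rescale the configuration anisotropically, compressing much more strongly along $v$ than orthogonally to it. The delicate part is to balance the three scales: the horizontal extent $\varepsilon$ of the shrunken configuration, the vertical compression $\eta$, and the curvature of the patch. One needs the vertical separation margin $\Theta(\eta)$ to dominate the quadratic curvature error $O(\varepsilon^2)$ while keeping all normals close to $v$, i.e.\ $\varepsilon^2\ll\eta\ll\varepsilon$, so that no incidence is ever flipped and the resulting family of translates is a genuine representation of $H'$. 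The converse extraction of exact hyperplanes relies on \separable-ness holding for \emph{every} pair of translates, which is precisely the assumed hypothesis and the reason why examples such as the two cubes of \Cref{fig:niceShapeC} must be excluded.
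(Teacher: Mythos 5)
Your proposal is correct in its overall strategy but takes a genuinely different route to hardness than the paper. The paper reduces from \stretcha for both \Cref{thm:halfspaces} and \Cref{thm:translates}, reusing a single hypergraph construction (twinned pseudohyperplanes, one point per $d$-dimensional cell): given a representation by translates, it extracts the \separable separators of the lenses---just as you do---but must then prove that these hyperplanes form a \emph{stretching} of \A, which costs a cell-counting argument (the $2^d$ cells around each vertex force $I_p$ to be $0$-dimensional) plus a cone argument showing $I_p$ lies on the correct side of every other separator (\Cref{lem:onlyif}). You instead reduce from \Recognition of halfspaces itself, doubling each hyperedge with its complement; this makes your converse essentially immediate, since every vertex lies in exactly one of the pair $C+t_e$, $C+t_{V\setminus e}$, so the separator guaranteed by \separable{}ness strictly splits $e$ from $V\setminus e$ and directly yields a halfspace representation of $H$. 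The trade-off is in the forward direction: the paper shrinks a representation with built-in slack (twin hyperplanes at distance $\alpha$, points equidistant between twins, all vertices in a box), whereas you must shrink an \emph{arbitrary} halfspace representation of $H'$. That is fixable but needs two repairs you do not spell out. First, points of $e$ may lie exactly on $\partial h_e$, so before compressing you must perturb the finitely many halfspaces to create two-sided margins (possible since $P$ is finite). Second, \curved{} only guarantees smoothness, i.e.\ differentiability, at the tangent points, so the boundary deviation of a translate from its tangent hyperplane over horizontal extent $\varepsilon$ is merely $o(\varepsilon)$, not the $O(\varepsilon^2)$ you assert; your three-scale balancing still goes through, but $\eta$ must be chosen adaptively so that the deviation is $\ll\eta\ll\varepsilon$, rather than via the fixed relation $\varepsilon^2\ll\eta\ll\varepsilon$. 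Relatedly, you should use \emph{both} tangent points of the \curved{} witness, one for each orientation of the compressed normals near $+v$ and $-v$, exactly as in the paper's \Cref{lem:if} (which also records the fact you rely on implicitly, that all normals in an $\epsilon$-ball around $v$ still certify \curved{}ness). Your membership argument---compiling the real-RAM membership test for $C$ into a polynomial-size \ETR formula---is essentially the paper's \Cref{lemma:memberTranslates}, phrased through the real-verification characterization of \ER. In sum: your decomposition buys a much shorter converse and avoids redoing stretchability machinery by leaning on \Cref{thm:halfspaces} as a black box; the paper's single construction buys explicit, quantified margins that make the curved-approximation step cleaner.
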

We note that for $d=1$, the \Recognition{} problems of halfspaces and translates of convex sets can be solved by sorting. Consequently, they can be decided in polynomial time.

One might be under the impression that the \Recognition{} problem
is \ER-complete for every reasonable family of geometric objects of dimension at least two.
%However, we show that is not the case for translates of polygons.
However, we show that the problem is contained in \NP for translates of polygons and thus, if $\NP\subsetneq\ER$ as widely believed, not \ER-complete.

\begin{restatable}{theorem}{polygonTranslates}\label{thm:polygons}
	Let $P$ be a simple polygon with integer coordinates in $\mathbb R^2$,
	and \F the family of all translates of $P$.
	Then \Recognition{}(\F) is contained in \NP.
\end{restatable}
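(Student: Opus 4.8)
The plan is to show that representability by translates of $P$ is equivalent to the feasibility of a Boolean combination of linear inequalities over the reals, and that the combinatorial choices hidden in such a system form a polynomial-size certificate whose validity can be checked by linear programming. The real variables are the coordinates of each point $p_v$ and of each translation vector $t_e$, with $s_e = P + t_e$. The crucial observation is that containment $p_v \in s_e$ is equivalent to $p_v - t_e \in P$, and since $P$ has integer coordinates this is a \emph{semilinear} condition in the variables, with integer coefficients of bounded bit-length. Thus, unlike the curved sets of \Cref{thm:translates}, translates of a polygon never introduce genuinely nonlinear constraints.

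First I would decompose the (possibly non-convex) polygon $P$ into triangles $P_1,\dots,P_k$ with $k = O(n)$, where $n$ is the number of vertices of $P$, by triangulating without Steiner points (using only diagonals between vertices of $P$). Then each triangle has integer-coordinate vertices, hence a description as an intersection of three closed halfplanes with integer coefficients of bounded bit-length. Now $q \in P$ holds iff $q \in P_i$ for some $i$. Consequently, for $v \in e$ the constraint $p_v - t_e \in P$ becomes a disjunction over pieces of conjunctions of (non-strict) linear inequalities, while for $v \notin e$ the constraint $p_v - t_e \notin P$ becomes a conjunction over all pieces, where within each piece at least one defining halfplane must be \emph{strictly} violated (since the pieces are closed and cover $P$).

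The certificate then consists of the following combinatorial data: for each pair $(v,e)$ with $v \in e$, the index of a triangle $P_i$ meant to contain $p_v - t_e$; and for each pair $(v,e)$ with $v \notin e$ and each triangle $P_i$, the index of one of its three halfplanes that $p_v - t_e$ is meant to violate. This data has size polynomial in $|V|$, $|E|$, and $n$. Fixing these choices collapses the whole Boolean combination into a single system of linear inequalities (some strict, some non-strict) in the variables $\{p_v\} \cup \{t_e\}$, with integer coefficients of polynomially bounded bit-length. To verify the certificate I would test feasibility of this system; feasibility of a mixed system of strict and non-strict linear inequalities over $\R$ is decidable in polynomial time via linear programming (for instance, introduce a slack $\varepsilon \ge 0$, replace each strict inequality $\ell > 0$ by $\ell \ge \varepsilon$, and maximize $\varepsilon$; the strict system is feasible iff the optimum is positive).

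The main obstacle is the non-convexity of a simple polygon: a single translate is no longer a conjunction of halfplane constraints, so both membership and non-membership become genuine Boolean combinations. The triangulation resolves this while enlarging the number of guessed choices only polynomially. Two further points need care. First, that strict separation (rather than mere non-containment in the closed polygon) may be demanded without loss of generality, which is why the $>$ inequalities appear and why we test strict feasibility. Second, that the integer coordinates of $P$ keep all coefficients of bounded bit-length, so the resulting LP has polynomially sized input and is solvable in polynomial time. In particular, although a realizing configuration of points may require large coordinates, we never have to write it down: verifying LP feasibility of the guessed system is enough to place \Recognition{}(\F) in \NP.
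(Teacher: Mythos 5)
Your proposal is correct and follows essentially the same route as the paper: guess, for each pair of a point and a translate, the combinatorial data locating the point relative to a fixed triangulation, then verify the guess by solving a polynomial-size linear program with the point coordinates and translation vectors as variables. The only differences are cosmetic---you triangulate $P$ itself and certify non-containment by one strictly violated halfplane per triangle, whereas the paper triangulates the convex hull of $P$ and uses a separating hull edge for points outside it; your explicit treatment of strict versus non-strict inequalities via the slack-maximization LP is in fact slightly more careful about boundary cases than the paper's terse argument.
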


\paragraph{Organization.}
We give an overview over our proof techniques in \Cref{sec:sketch}. Full proofs of \Cref{thm:polygons} as well as the membership parts of \Cref{thm:halfspaces,thm:translates} are found in \Cref{sec:Membership}. We introduce the version of pseudohyperplane stretchability used in our hardness reductions in \Cref{sec:stretchability}. Full proofs of the hardness parts of \Cref{thm:halfspaces,thm:translates} can be found in \Cref{sec:halfspaces,sec:niceshapes}, respectively.
%\simon{shorten or delete if needed}

\subsection{Related work}

In this section we give a concise overview over related work on the complexity class \ER, geometric intersection graphs, and on other set systems related to hypergraphs.

\paragraph{The Existential Theory of the Reals.}
\label{par:ETR}
The complexity class \ER (pronounced as `ER' or `exists R')
is defined via its canonical complete problem \ETR (short for \emph{Existential Theory of the Reals}) and contains all problems that polynomial-time many-one reduce to it.
In an \ETR instance, we are given a sentence of the form
\[
\exists x_1, \ldots, x_n \in \R :
\varphi(x_1, \ldots, x_n),
\]
where~$\varphi$ is a well-formed and quantifier-free formula consisting of polynomial equations and inequalities in the variables and the logical connectives $\{\land, \lor, \lnot\}$.
The goal is to decide whether this sentence is true.

The complexity class \ER gains its importance from its numerous influential complete problems.
Important \ER-completeness results include the realizability of abstract order types~\cite{Mnev1988_UniversalityTheorem,Shor1991_Stretchability}, geometric linkages~\cite{Schaefer2013_Realizability}, and the recognition of geometric intersection graphs, as further discussed below.

More results concern graph drawing~\cite{dobbins_AreaUniversality_Journal,Erickson2019_CurveStraightening,Lubiw2018_DrawingInPolygonialRegion,Schaefer2021_FixedK}, the Hausdorff distance~\cite{HausDorff}, polytopes~\cite{Dobbins2019_NestedPolytopes,Richter1995_Polytopes}, Nash-equilibria~\cite{Berthelsen2019MultiPlayerNash,Bilo2016_Nash,Bilo2017_SymmetricNash,Garg2018_MultiPlayer,Schaefer2017_FixedPointsNash}, 
training neural networks~\cite{Abrahamsen2021NeuralNetworks,2022trainFull},
matrix factorization~\cite{Chistikov2016_Matrix,Schaefer2018_TensorRank,Shitov2016_MatrixFactorizations,Shitov2017_PSMatrixFactorization,tunccel2022computational}, 
continuous constraint satisfaction problems~\cite{Miltzow2022_ContinuousCSP},
geometric packing~\cite{Abrahamsen2020_Framework}, the art gallery problem~\cite{Abrahamsen2018_ArtGallery,stade2022complexity}, and covering polygons with convex polygons~\cite{Abrahamsen2022_Covering}.

\paragraph{Geometric Hypergraphs.}
Many aspects of hypergraphs with geometric representations have been studied. Hypergraphs represented by touching polygons in $\R^3$ have been studied by Evans et al.~\cite{evans}. Bounds on the number of hyperedges in hypergraphs representable by homothets of a fixed convex set $S$ have been established by Axenovich and Ueckerdt~\cite{axenovichHomothets}. Smorodinsky studied the chromatic number and the complexity of coloring of hypergraphs represented by various types of sets in the plane~\cite{smorodinsky}. Dey and Pach~\cite{deyPachExtremal} generalize many extremal properties of geometric graphs to hypergraphs where the hyperedges are induced simplices of some point set in $\R^d$. Haussler and Welzl~\cite{haussler1986epsilon} defined $\epsilon$-nets, subsets of vertices of hypergraphs called range spaces with nice properties. Such $\epsilon$-nets of geometric hypergraphs have been studied quite intensely~\cite{enets4,enets3,pachEpsilonNets2,pachEpsilonNets1}.

While there are many structural results, we are not aware of any research into the complexity of recognizing hypergraphs given by geometric representations, other than the recognition of embeddability of simplicial complexes, as we will discuss in the next paragraph.

\paragraph{Other Representations of Hypergraphs.}
Hypergraphs are in close relation with abstract simplicial complexes. In particular, an abstract simplicial complex (complex for short) is a set system that is closed under taking subsets. A $k$-complex is a complex in which the maximum size of a set is $k$. In a geometric representation of an abstract simplicial complex $H=(V,E)$ each $\ell$-set of $E$ is represented by a $\ell$-simplex such that two simplices of any two sets intersect exactly in the simplex defined by their intersection (and are disjoint in case of an empty intersection). Note that 1-complexes are graphs and hence deciding the representability in the plane corresponds to graph planarity (which is in \Poly). In stark contrast, Abrahamsen, Kleist and Miltzow recently showed that deciding whether a 2-complex has a geometric embedding in~$\mathbb R^3$ is \ER-complete~\cite{abrahamsenComplexes}; they also prove hardness for other dimensions.
Similarly, piecewise linear embeddings of simplicial complexes have been studied~\cite{vcadek2014computing,cadek2014polynomial,vcadek2017algorithmic,matouvsek2018embeddability,matousek2011hardness,mesmay2020embeddability,Skopenkov2020}.

\paragraph{Recognizing Geometric Intersection Graphs.}
Given a set of geometric objects,  its intersection graph has a vertex for each object, and an edge between any two intersecting objects.
The complexity of recognizing geometric intersection graphs has been studied for various geometric objects. We summarize these results in \Cref{fig:Class-Overview}. 

\begin{figure}[htb]
	\centering
	\includegraphics[scale=1]{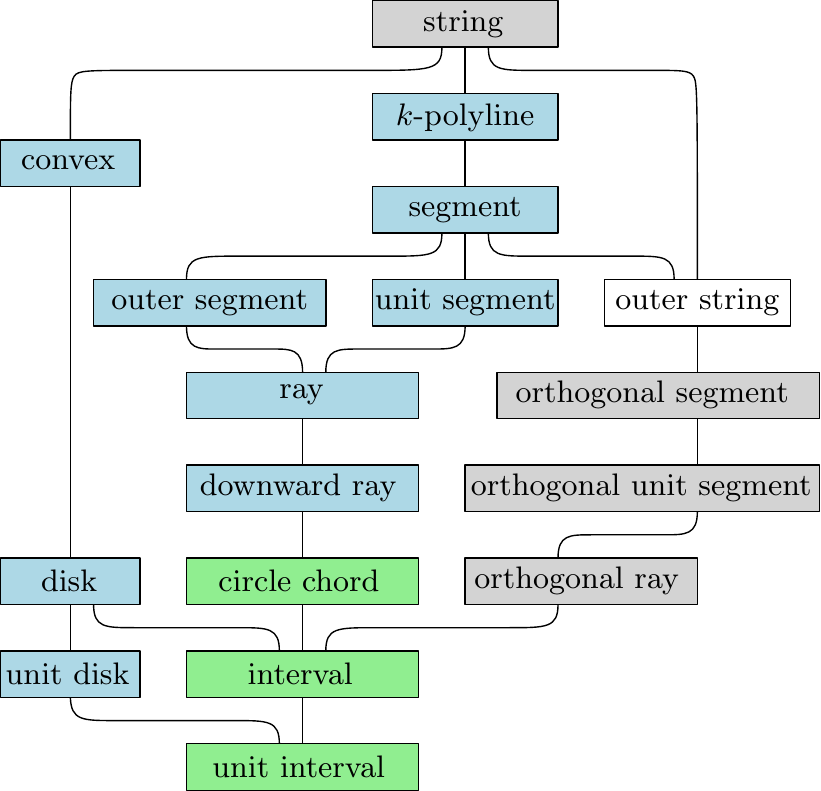}  
	\caption{Containment relations of geometric intersection graphs. 
		Recognition of a green class is in \Poly, of a grey class is \NP-complete, of a blue class is \ER-complete, and of a white class is unknown.}
	\label{fig:Class-Overview}
\end{figure}

While intersection graphs of 
circle chords (Spinnrad~\cite{spinrad1994recognition}),
unit intervals (Looges and Olariu~\cite{loogesUnitInterval}) 
and intervals (Booth and Lueker~\cite{boothInterval})
can be recognized in polynomial time, recognizing 
string graphs (Schaefer and Sedgwick~\cite{schaefer2003recognizing})
is NP-complete. In contrast, \ER-completeness of recognizing intersection graphs has been proved for 
(unit) disks by McDiarmid and M{\"u}ller \cite{mcdiarmid2013integer},
convex sets by Schaefer~\cite{schaeferConvex}, 
downward rays by Cardinal et al.~\cite{Cardinal2018_Intersection},
outer segments by Cardinal et al.~\cite{Cardinal2018_Intersection}, 
unit segments by Hoffmann et al.~\cite{unitSegER},
segments by Kratochv{\'\i}l and Matou{\v{s}}ek~\cite{Kratochvil1994_IntersectionGraphs},
$k$-polylines by Hoffmann et al.~\cite{unitSegER}, and
unit balls by Kang and M{\"u}ller~\cite{Kang2012}.

The existing research landscape indicates that recognition problems of intersection graphs are \ER-complete in case that 
the family of objects satisfy two conditions:
Firstly, they need to be ``geometrically solid'',
i.e., not strings.
Secondly, some non-linearity must be present by
either allowing rotations, or by the objects having some curvature.
Our results indicate that this general intuition might translate to the recognition of geometric hypergraphs.

\subsection{Overview of Proof Techniques}\label{sec:sketch}
We prove containment in \ER and \NP using standard arguments, providing witnesses and verification algorithms.

We prove the hardness parts of \Cref{thm:halfspaces,thm:translates} by reduction from stretchability of pseudohyperplane arrangements. 
The hypergraph we build from the given arrangement differs from the one built in the proof of \Cref{thm:halfspaces} given in~\cite{tanenbaum1995Halfspaces}, since we wish to use a single construction which works nicely for both theorems.
Given a simple pseudohyperplane arrangement $\A$, we construct a hypergraph~$H$ as follows: 
We double each pseudohyperplane by giving it a parallel \emph{twin}. In this arrangement, we place a point in every $d$-dimensional cell. These points represent the vertices of $H$. Every pseudohyperplane $\ell$ then defines a hyperedge, which contains all of the points on the same side of $\ell$ as its twin pseudohyperplane. See \Cref{fig:hypergraphConstruction} for an illustration of this construction.

Because this construction can also be performed on a hyperplane arrangement, it is straightforward to prove that if $\A$ is stretchable, $H$ can be represented by halfspaces. Conversely, we show that the hyperplanes bounding the halfspaces in a representation of $H$ must be a stretching of $\A$.

For \Cref{thm:translates}, \curved{}ness of a set $C$ implies that locally, $C$ can approximate any halfspace with normal vector close to $v$ as in the definition of \curved. This allows us to prove that stretchability of $\A$ implies representability of $H$ by translates of $C$. The set~$C$ being \separable is used when reconstructing a hyperplane arrangement from a representation of $H$.

\section{Membership}
\label{sec:Membership}
In this section we show \ER- and \NP-membership.

\subsection{Halfspaces}

For a given hypergraph $H$, it is not difficult to formulate an \ETR formula describing all needed properties for a geometric representation by halfspaces.
Therefore, we get the \ER-membership part of \Cref{thm:halfspaces}. 
\begin{restatable}{lemma}{memberHalfspaces} \label[lemma]{lemma:memberHalfspaces}
	Fix $d\geq 1$ and let  \F denote the family of halfspaces in $\R^d$. Then  \Recognition{}(\F) is contained in \ER.
\end{restatable}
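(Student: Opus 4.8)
The plan is to show that representability by halfspaces can be expressed as an \ETR sentence of polynomial size, from which \ER-membership follows immediately (since \ETR is the canonical complete problem for \ER).

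First I would introduce real variables encoding a candidate geometric representation. For each vertex $v \in V$, I introduce $d$ variables $p_v = (p_{v,1},\dots,p_{v,d})$ describing its coordinates. For each hyperedge $e \in E$, a halfspace in $\R^d$ is determined by a normal vector $a_e \in \R^d$ and an offset $b_e \in \R$, so I introduce $d+1$ variables per hyperedge, with the associated halfspace being $H_e = \{x \in \R^d : \langle a_e, x\rangle \le b_e\}$. The total number of variables is $d|V| + (d+1)|E|$, which is polynomial in the size of $H$.

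Next I would write down the formula $\varphi$ encoding the defining property of a geometric representation, namely $\{p_v \mid v \in V\} \cap H_e = \{p_v \mid v \in e\}$ for every $e$. This unfolds into the conjunction, over all $e \in E$ and all $v \in V$, of the following: if $v \in e$, then $p_v$ lies inside $H_e$, encoded as $\langle a_e, p_v\rangle \le b_e$; and if $v \notin e$, then $p_v$ lies strictly outside $H_e$, encoded as $\langle a_e, p_v\rangle > b_e$. Each such atom is a polynomial inequality (in fact of degree two, because of the products $a_{e,i}\, p_{v,i}$) in the variables, and there are $O(|V|\cdot|E|)$ of them, so the whole formula has polynomial size. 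Membership of a point in a halfspace being exactly a linear inequality in the point for fixed halfspace parameters, this faithfully captures the incidence structure, and the sentence $\exists\, p,a,b : \varphi$ is true precisely when $H$ admits a geometric representation by halfspaces.

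There is no serious obstacle here; the only point requiring a little care is the handling of \emph{connectedness}, which is part of the definition of a geometric representation. Since halfspaces are automatically convex and hence connected, this condition is satisfied for free and imposes no additional constraints on the formula. I would also note that the strict inequality $\langle a_e, p_v\rangle > b_e$ for non-members is exactly what makes the two sets coincide (rather than merely have an inclusion); \ETR formulas are permitted to use strict inequalities, so this is unproblematic. Concluding, the constructed \ETR sentence is satisfiable if and only if $H$ is representable by halfspaces, and since it can be built in polynomial time from $H$, this gives a polynomial-time many-one reduction to \ETR, establishing \Recognition{}(\F) $\in$ \ER.
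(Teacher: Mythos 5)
Your proposal is correct and follows essentially the same route as the paper's proof: introduce $d$ coordinate variables per vertex and $d+1$ coefficient variables per hyperedge, encode membership by a non-strict and non-membership by a strict (degree-two) polynomial inequality, and observe that the resulting polynomial-size \ETR sentence is satisfiable if and only if $H$ is representable. Your additional remarks (connectedness is free for halfspaces, strictness gives set equality rather than inclusion) are consistent with, and slightly more explicit than, the paper's argument.
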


The \ER-membership part of \Cref{thm:translates} is obtained by providing a simple verification algorithm~\cite{Erickson2022_SmoothingGap} (similar to how \NP-membership can be shown), based on the fact that our considered set $C$ is \representable.
\begin{restatable}{lemma}{memberTranslates}\label[lemma]{lemma:memberTranslates}
	For some $d\geq 1$, let $C\subseteq\R^d$ be a computable set and let \F be the family of all translates of $C$. Then, \Recognition{}(\F) is contained in \ER.
\end{restatable}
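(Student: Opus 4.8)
The plan is to show membership in \ER{} by exhibiting a short certificate together with a polynomial-time verification procedure on a real RAM, which is the standard route to \ER-membership (cf.~\cite{Erickson2022_SmoothingGap}). Given a hypergraph $H=(V,E)$, a geometric representation by translates of $C$ is completely determined by the choice of the points $p_v\in\R^d$ for $v\in V$ and the translation vectors $t_e\in\R^d$ for $e\in E$ (so that $s_e = C + t_e$). I would take the certificate to be exactly these real numbers, i.e.\ the vector of coordinates $(p_v)_{v\in V}$ and $(t_e)_{e\in E}$, which consists of $d(|V|+|E|)$ real numbers. Since the number of real variables in an \ER-certificate may be polynomial in the input size, this is a legitimate witness.

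First I would argue correctness of the witness format: by the definition of a geometric representation, $(P,S)$ with $P=\{p_v\}$ and $S=\{C+t_e\}$ represents $H$ if and only if for every hyperedge $e$ and every vertex $v$ we have $p_v \in C + t_e \iff v\in e$. Thus the representation condition decomposes into $|V|\cdot|E|$ independent membership tests of the form ``is the point $p_v - t_e$ contained in $C$?''. This reformulation is the conceptual core: it reduces the global representability question to a polynomial number of point-in-$C$ queries.

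Next I would invoke the hypothesis that $C$ is \representable: by definition, for any point $p\in\R^d$ one can decide in polynomial time on a real RAM whether $p\in C$. The verification algorithm therefore loops over all pairs $(v,e)\in V\times E$, computes $p_v - t_e$ using a constant number of arithmetic operations, runs the polynomial-time membership oracle for $C$, and checks that the Boolean outcome agrees with whether $v\in e$. The algorithm accepts if and only if all $|V|\cdot|E|$ checks pass. Each check runs in polynomial time and there are polynomially many of them, so the whole verification runs in polynomial time on a real RAM. Since a real RAM machine deciding the validity of a witness in polynomial time certifies membership in \ER, this establishes $\Recognition(\F)\in\ER$.

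I do not expect a serious obstacle here, as the argument is a routine witness-plus-verifier construction; the only point requiring care is the precise model of computation. One must ensure that the membership oracle guaranteed by \representability runs on the same real RAM model underlying the definition of \ER, and that its polynomial-time guarantee composes correctly when invoked a polynomial number of times and fed inputs ($p_v-t_e$) that are themselves arithmetic combinations of the witness values. The mild subtlety is thus bookkeeping about the real RAM / \ER\ correspondence rather than any genuine mathematical difficulty; the \emph{\representable} hypothesis is doing exactly the work needed to discharge it.
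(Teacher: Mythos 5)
Your proposal is correct and matches the paper's proof: the paper likewise takes the witness to be the real coordinates of the points and the translation vectors, and verifies it on a real RAM by using the \representable{} hypothesis to test each point-in-translate condition, which is exactly your $p_v - t_e \in C$ reformulation spelled out in more detail. No substantive difference.
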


The full proofs of \Cref{lemma:memberHalfspaces,lemma:memberTranslates} have been omitted due to space constraints and can be found in 
\ifdefined\arxiv
\Cref{app:membership}.
\else
the full version of this paper.
\fi

\subsection{Translates of Polygons -- Proof of \Cref{thm:polygons}}

Here, we show \Cref{thm:polygons}, i.e., \NP-membership of \Recognition{} of translates of some simple polygon $P$.

\polygonTranslates*
\begin{proof}
	The proof uses a similar argument to the one used to show that
	the problem of packing translates of polygons inside a polygon is in \NP~\cite{Abrahamsen2020_Framework}.
	For an illustration, consider \Cref{fig:triangulated_polygon}.
	We first triangulate the convex hull of $P$, such that each edge of $P$ appears in the triangulation. 
	Then, a representation of a hypergraph $H$ by translates of $P$ gives rise to a certificate as follows: For each pair of a point $p$ and a translate $P'$ of $P$, we specify whether $p$ lies in the convex hull of $P'$. If it does, we specify in which triangle $p$ lies. Otherwise, we specify an edge of the convex hull for which $p$ and $P’$ lie on opposite sides of the line through the edge.
	
	\begin{figure}[htb]
		\centering
		\includegraphics[page =2]{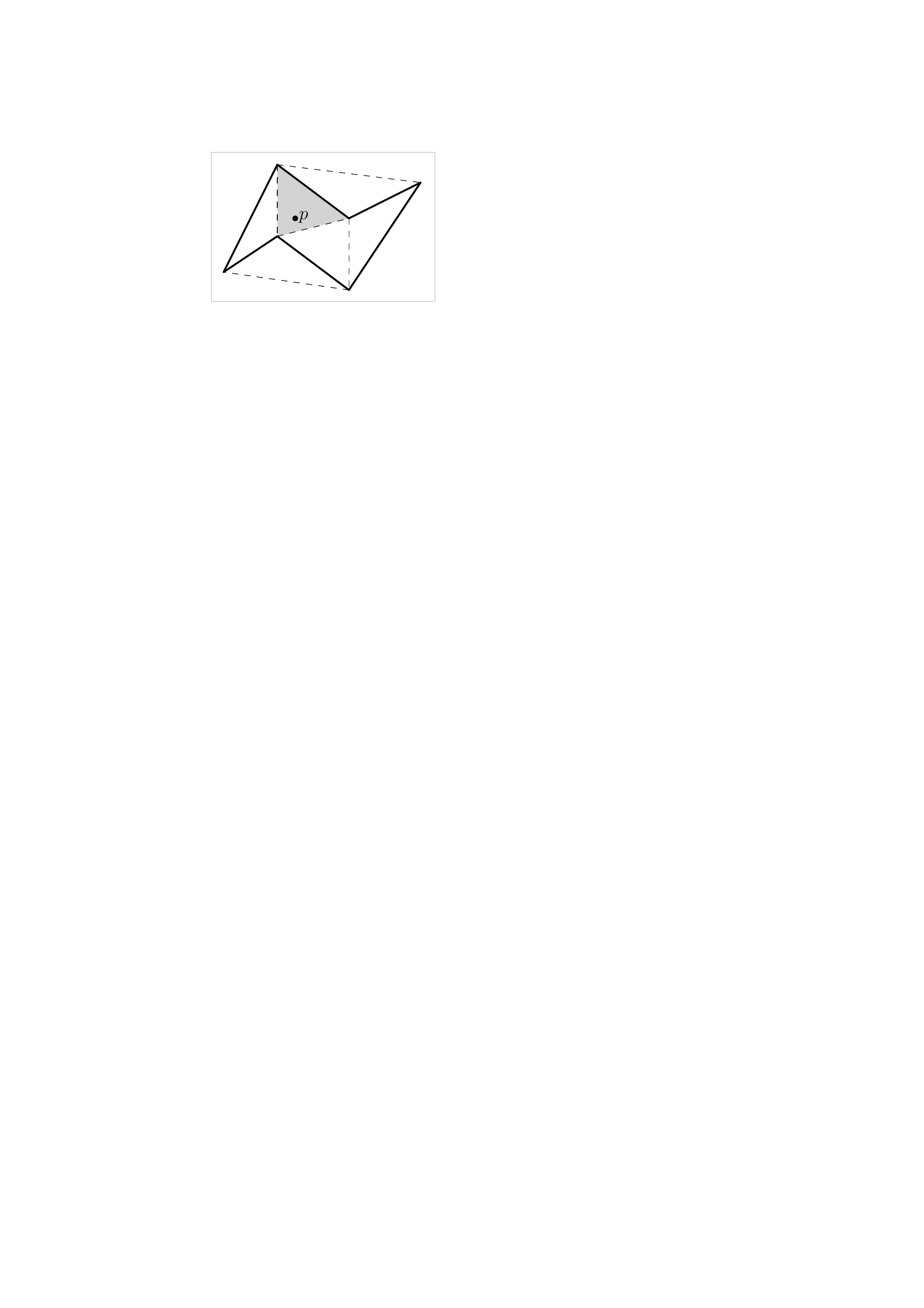}
		\caption{The polygon $P'$, a triangulation of its convex hull, and the triangle that contains $p$.}
		\label{fig:triangulated_polygon}
	\end{figure}

	Such a certificate can be tested in polynomial time: we create a linear program whose variables describe the locations of the points $p$ and the translation vectors of each translate of~$P$, and whose constraints enforce the points to lie in the areas described by the certificate. 
	This linear program has a number of constraints and variables polynomial in the size of $H$, and can be thus solved in polynomial time.
	
	The solution of this linear program gives the location of the points and the translation vectors of the polygons. This implies that these coordinates are all polynomial and could be used as a certificate directly.
 \qed
\end{proof}

\section{Pseudohyperplane Stretchability}\label{sec:stretchability}

A \emph{pseudohyperplane arrangement}
in $\R^d$ is an arrangement of 
\emph{pseudohyperplanes},
where a pseudohyperplane is a set homeomorphic to a hyperplane, and each intersection of pseudohyperplanes is homeomorphic to a plane of some dimension.
In the classical definition, every set of $d$ pseudohyperplanes has a non-empty intersection. Here, we consider \emph{partial pseudohyperplane arrangements (\PPHA{s})}, 
where not necessarily every set of $\leq d$ pseudohyperplanes has a common intersection.

A \PPHA is \emph{simple} if no more than $k$ pseudohyperplanes intersect in a space of dimension $d-k$, in particular, no $d+1$ pseudohyperplanes have a common intersection. We call the $0$-dimensional intersection points of $d$ pseudohyperplanes the \emph{vertices} of the arrangement.
A simple \PPHA \A is called \emph{stretchable} if there exists a hyperplane arrangement $\A'$ such that each vertex in \A also exists in $\A'$ and each (pseudo\nobreakdash-)hyperplane splits this set of vertices the same way in $\A$ and~$\A'$. In other words, each vertex of \A lies on the correct side of each hyperplane in~\A'.
We then call the hyperplane arrangement $\A'$ a \emph{stretching} of~\A.

The problem \stretcha is the problem of deciding whether a  simple \PPHA in $\R^d$ is  stretchable.
For $d=2$, \stretcha contains the stretchability of simple pseudoline arrangements which is known to be \ER-hard~\cite{Mnev1988_UniversalityTheorem,Shor1991_Stretchability}.
It is straightforward to prove \ER-hardness for all $d\geq 2$; the proof can be found in 
\ifdefined\arxiv
\Cref{app:stretcha}.
\else
the full version of this paper.
\fi

\begin{restatable}{theorem}{stretchability}\label{thm:hardnessofstretcha}
\stretcha is \ER-hard for all $d\geq 2$.
\end{restatable}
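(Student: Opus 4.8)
The plan is to prove hardness for general $d$ by a reduction from \stretchad{2}, which is \ER-hard by the results of \mnev and Shor cited above. Given a simple pseudoline arrangement $\A$ with pseudolines $\ell_1,\dots,\ell_n$ in $\R^2$, I would build a simple \PPHA $\B$ in $\R^d$ as follows. Identifying $\R^2$ with the plane $\{x_3=\dots=x_d=0\}$, I \emph{extrude} each pseudoline into the extra coordinates by setting $\tilde\ell_i:=\ell_i\times\R^{d-2}$; since $\ell_i$ is homeomorphic to $\R$, each $\tilde\ell_i$ is homeomorphic to a hyperplane and is thus a \PH. To recover $0$-dimensional vertices I add the $d-2$ genuine coordinate hyperplanes $H_j:=\{x_j=0\}$ for $j=3,\dots,d$. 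Because $\A$ is simple, no three pseudolines meet, so at most two of the $\tilde\ell_i$ have a common intersection; combined with any $m$ of the coordinate hyperplanes, such an intersection has dimension $d-(2+m)$ (respectively $d-(1+m)$ or $d-m$ when one or zero extruded pseudohyperplanes are used), matching the simplicity requirement. The only $0$-dimensional intersections are $\tilde\ell_i\cap\tilde\ell_k\cap H_3\cap\dots\cap H_d=\{(\ell_i\cap\ell_k,0,\dots,0)\}$ for crossing pairs, so the vertices of $\B$ are in bijection with the vertices of $\A$.

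For the forward direction, suppose $\A$ is stretchable by lines $L_1,\dots,L_n$. I lift these to the genuine hyperplanes $\tilde L_i:=L_i\times\R^{d-2}$ and keep the (already straight) coordinate hyperplanes $H_j$. This hyperplane arrangement reproduces every vertex $(L_i\cap L_k,0,\dots,0)$ of $\B$, and for each pseudohyperplane the side on which a given vertex lies is inherited directly from the $2$-dimensional picture (for the $\tilde L_j$) or from the sign of a coordinate (for the $H_j$). Hence it is a stretching of $\B$, and $\B$ is stretchable.

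The backward direction carries the main content. Suppose $\B$ is stretched by hyperplanes $\hat L_i$ and $\hat H_j$. The crux is that each vertex of $\B$ must still be realized as a \emph{single} point: a vertex is the common intersection of the $d$ hyperplanes $\hat L_i,\hat L_k,\hat H_3,\dots,\hat H_d$, so their normals are linearly independent. In particular $\hat H_3,\dots,\hat H_d$ are independent, whence $F:=\hat H_3\cap\dots\cap\hat H_d$ is exactly a $2$-flat. Restricting each $\hat L_i$ to $F\cong\R^2$ then yields a genuine line arrangement whose pairwise crossings are precisely the images of the vertices of $\A$ and whose sidedness relations, inherited from the stretching of $\B$, agree with those of $\A$. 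Thus this line arrangement in $F$ is a stretching of $\A$, proving $\A$ stretchable.

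I expect the backward direction to be the main obstacle: one must argue carefully that $F$ is exactly two-dimensional and that restriction to $F$ faithfully reproduces the full sidedness pattern of $\A$, rather than merely a consistent arrangement in some higher-dimensional flat. A minor point to dispatch is pseudolines crossing no other pseudoline, which can be assumed absent (or deleted without affecting stretchability) so that every $\hat L_i\cap F$ is a proper line. Finally I would verify that $\B$ is a valid simple \PPHA in the sense of \Cref{sec:stretchability} and that the whole construction is computable in polynomial time from the combinatorial encoding of $\A$.
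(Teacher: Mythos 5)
Your proposal is correct and follows essentially the same route as the paper's own proof: the paper likewise reduces from simple pseudoline stretchability by extruding each pseudoline orthogonally into a pseudohyperplane, adding the $d-2$ coordinate hyperplanes $x_j=0$ as a ``canvas,'' lifting a stretched line arrangement for the forward direction, and restricting the stretched hyperplanes to the intersection of the canvas hyperplanes for the backward direction. If anything, your treatment of the backward direction (using linear independence of the $d$ normals at a realized vertex to conclude that the canvas flat is exactly two-dimensional, and flagging the degenerate case of crossing-free pseudolines) is more careful than the paper's one-line restriction argument.
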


Similar extensions of pseudoline stretchability to higher dimensions have been studied in the literature. For example, \mnev{}'s universality theorem~\cite{Mnev1988_UniversalityTheorem} extends to higher dimensions, however we are not aware of any existing proofs that it also implies \ER-hardness in $d>2$. Kang and Müller~\cite{Kang2012} also studied a similar version of stretchability of partial arrangements of pseudohyperplanes.

\section{Hardness for Halfspaces -- Proof of \Cref{thm:halfspaces}
%	 \texorpdfstring{\Cref{thm:halfspaces}}{Theorem 1}
}
\label{sec:halfspaces}
We now present the hardness part of \Cref{thm:halfspaces}.
\halfspaces*
\begin{proof}[of \Cref{thm:halfspaces}]
	We reduce from \stretcha. Let $\A$ be a simple \PPHA.  For an example consider \Cref{fig:hypergraphConstructionA}. In a first step, we insert a parallel twin $\ell'$ for each \PH~$\ell$. The twin is close enough to $\ell$ such that $\ell$ and $\ell'$ have the same intersection pattern. Since $\ell$ and $\ell'$ are parallel, they do not intersect each other. This yields an arrangement $\A'$, see \Cref{fig:hypergraphConstructionB}.

	\begin{figure}[b!]
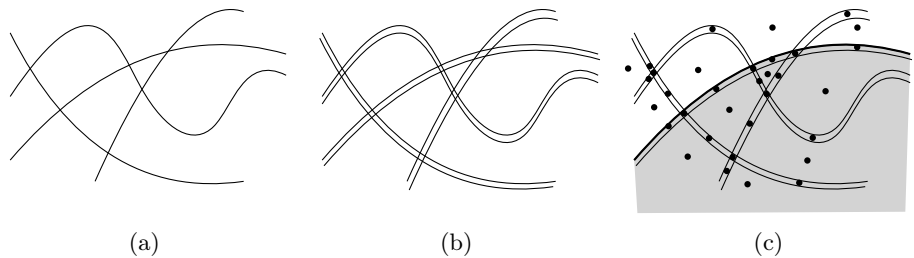

		\centering
		\begin{subfigure}[t]{.32\textwidth}
			\centering
			\includegraphics[page=1]{figures/halfspaces.pdf}\hfill
			\caption{}
			\label{fig:hypergraphConstructionA}
		\end{subfigure}\hfill
		\begin{subfigure}[t]{.32\textwidth}
			\centering
			\includegraphics[page=3]{figures/halfspaces.pdf}\hfill
			\caption{}
			\label{fig:hypergraphConstructionB}
		\end{subfigure}\hfill
		\begin{subfigure}[t]{.32\textwidth}
			\centering
			\includegraphics[page=4]{figures/halfspaces.pdf}\hfill
			\caption{}
			\label{fig:hypergraphConstructionC}
		\end{subfigure}
		\caption{Construction of the hypergraph~$H$.
  (a)  A simple \PPHA $\A$. (b) The arrangement $\A'$ obtained by inserting twins. (c) The vertices of $H$ are the points in the cells, hyperedges of $H$ are defined by the pseudohalfspaces;
  the gray region shows one of the hyperedges.}
		\label{fig:hypergraphConstruction}
	\end{figure}

 	In a second step, we introduce a point in each $d$-dimensional cell of $\A'$; each point represents a vertex in our hypergraph $H$.
	Lastly, we define a hyperedge for each \PH $\ell$ of $\A'$: The hyperedge contains all of the points that lie on the side of $\ell$ that the twin \PH $\ell'$ lies in, see \Cref{fig:hypergraphConstructionC}.
	Note that we define such a hyperedge for every \PH of $\A'$. Thus, for every \PH~$\ell$ of the original arrangement $\A$ we define two hyperedges, whose union contains all vertices of $H$.
	
	It remains to show that $H$ is representable by halfspaces if and only if $\A$ is stretchable. 
	If $\A$ is stretchable, the construction of a representation of $H$ is straightforward: Consider a hyperplane arrangement \B which is a stretching of~$\A$.
	Then, for each hyperplane, we add a parallel hyperplane very close, so that their intersection patterns coincide. This results in a hyperplane arrangement $\B'$. We now prove that every $d$-dimensional cell of $\A'$ must also exist in $\B'$.
	First, note that each such cell corresponds to a cell of $\A$, which has at least one vertex on its boundary. All vertices of $\A$ exist in $\B$ by definition of a stretching. Furthermore, the subarrangement of the $d$ hyperplanes in $\B$ intersecting in this vertex must be simple, since their intersection could not be $0$-dimensional otherwise. In the twinned hyperplane arrangement $\B'$, all $3^d$ of the $d$-dimensional cells incident to this vertex (a cell is given by the following choice for each of the hyperplane pairs: above both hyperplanes, between the hyperplanes, or below both hyperplanes) must exist. This proves that all $d$\nobreakdash-dimensional cells of $\A'$ also exist in $\B'$. Inserting a point in each such $d$-dimensional cell and considering the (correct) halfspaces bounded by the hyperplanes of $\B'$ yields a representation of $H$.

	We now consider the reverse direction. Let $(P,\mathcal H)$ be a tuple of points and halfspaces representing $H$. 
	Let $h_{i,1}$ and $h_{i,2}$ be the two halfspaces associated with a \PH~$\ell_i$ of $\A$. 
	Let  $p_i$ denote the  $(d-1)$-dimensional hyperplane bounding $h_{i,1}$. We show that the family $\{p_i\}_i$ of these hyperplanes is a stretching of $\A$.
	
	For each intersection point $q$ of $d$ \PH{s} $\ell_1,\dots \ell_d$ in $\A$, we consider the  corresponding $2d$ \PH{s} in $\A'$. The \PPHA $\A'$ contains $3^d$ $d$-dimensional cells incident to their $2^d$ intersections; each of which contains a point. 
	We first show that the associated halfspaces must induce at least $3^d$ cells, one of which is bounded and represents the intersection point, see also \Cref{fig:HalfspacesStretchabilityA}:
	\begin{figure}[htb]
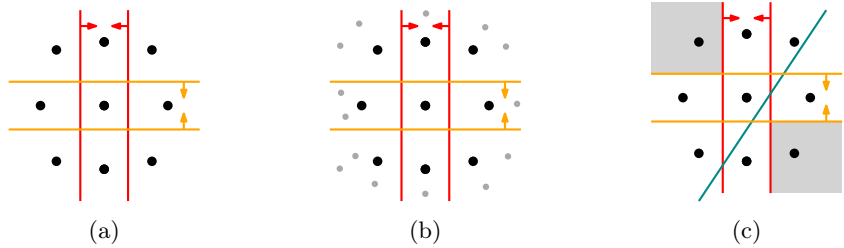

		\centering
		\begin{subfigure}[t]{.3\textwidth}
			\centering
			\includegraphics[page=16]{figures/halfspaces.pdf}
			\caption{}
			\label{fig:HalfspacesStretchabilityA}
		\end{subfigure}\hfill
		\begin{subfigure}[t]{.3\textwidth}
			\centering
			\includegraphics[page=17]{figures/halfspaces.pdf}
			\caption{}
			\label{fig:HalfspacesStretchabilityB}
		\end{subfigure}\hfill
		\begin{subfigure}[t]{.3\textwidth}
			\centering
			\includegraphics[page=19]{figures/halfspaces.pdf}
			\caption{}
			\label{fig:HalfspacesStretchabilityC}
		\end{subfigure}
		\caption{Illustration for the proof of \Cref{thm:halfspaces} for $d=2$ showing that representability of $H$ implies stretchability of~\A. (a) Any two pseudolines $\ell_i,\ell_j$ in $\A$ have four corresponding lines bounding the respective halfplanes in $H$; these four lines induce 9 cells, each of which contains a point.  (b) Each point in $P$ belongs to exactly one of these 9 cells; the central bounded cell contains a unique point representing the intersection of $\ell_i$ and $\ell_j$. (c) The central bounded cell cannot be intersected by a line $p_k$ with $k\neq i,j$.}
		\label{fig:HalfspacesStretchability}
	\end{figure}
 	These $3^d$ points have pairwise distinct patterns of whether or not they are contained in each of the $2d$ halfspaces. Thus, these points need to lie in distinct cells of the arrangement of halfspaces, which proves the claim.
	Moreover, every point in $P$ belongs to exactly one of these $3^d$ cells. In particular, the central bounded cell, denoted by $c(q)$, contains exactly one point of~$P$, see \Cref{fig:HalfspacesStretchabilityB}.

	Now, we argue that the complete cell $c(q)$ (and thus in particular the intersection point of the hyperplanes representing $q$) lies on the correct side of each hyperplane $p$ in $\{p_i\}_i$. Note that, by construction of the hypergraph $H$, the $3^d$ points of $q$ lie on the same side of $p$. Suppose for a contradiction that $p$ intersects $c(q)$, see \Cref{fig:HalfspacesStretchabilityC}. Then there exist two unbounded cells incident to $c(q)$ which lie on different sides of $p$; these cells can be identified by translating $p$ until it intersects $c(q)$ only in the boundary. This yields a contradiction to the fact that the $3^d$ points of $q$ lie on the same side of $p$.

	We conclude that each intersection point of $d$ pseudohyperplanes in \A also exists in the arrangement $\{p_i\}_i$ and lies on the correct side of all hyperplanes. Thus, $\{p_i\}_i$ is a stretching of \A and \A is stretchable. \qed
\end{proof}

\section{Hardness for Convex, \Curved, and \Separable Sets  -- Proof of \Cref{thm:translates}}\label{sec:niceshapes}

We are now going to prove the hardness part of \Cref{thm:translates}.
\translates*

To this end, consider any fixed convex, \curved, and \separable set~$C$ in $\R^d$. Note that we can assume $C$ to be fully-dimensional, since otherwise each connected component would live in some lower-dimensional affine subspace, with no interaction between such components.
We use the same reduction from the problem \stretcha as in the proof for halfspaces in the previous section and show that the constructed hypergraph $H$ is representable by translates of $C$ if and only if the given \PPHA \A is stretchable.

\begin{restatable}{lemma}{approximateHalfspaces}\label[lemma]{lem:if}
	If \A is stretchable, $H$ is representable by translates of $C$.
\end{restatable}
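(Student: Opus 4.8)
The plan is to build on the halfspace representation that the construction already provides. Since the hypergraph $H$ is built from $\A$ by exactly the twin construction used in the proof of \Cref{thm:halfspaces}, stretchability of $\A$ provides us with a finite point set $P$ together with halfspaces $h_1,\dots,h_m$ forming a geometric representation of $H$ by halfspaces. I would then apply a carefully chosen affine map to this representation, and finally replace each halfspace by a single translate of $C$ whose boundary locally imitates the corresponding bounding hyperplane on the (finite) point set. Since affine maps preserve which points lie in which halfspaces, the middle step keeps a valid halfspace representation of $H$; convexity of $C$ makes every representing set connected, so only the last step carries real content.

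The affine map serves to align all bounding hyperplanes with the two special tangent directions of $C$ and to shrink the configuration. Choose coordinates with $v=e_d$. First, a generic rotation guarantees that no bounding hyperplane is parallel to $v$, i.e.\ every normal $w_i$ satisfies $\langle w_i,v\rangle\neq 0$; only finitely many (measure-zero) directions are excluded. Second, contracting along $v$ by a factor $\epsilon$, i.e.\ applying $\operatorname{diag}(1,\dots,1,\epsilon)$, sends a hyperplane with normal $w_i=(w_{i,1},\dots,w_{i,d})$ to one with normal proportional to $(w_{i,1},\dots,w_{i,d-1},w_{i,d}/\epsilon)$. As $\epsilon\to 0$ this direction tends to $+v$ or $-v$ according to the sign of $w_{i,d}$; as there are finitely many hyperplanes and each has $w_{i,d}\neq 0$, a small enough $\epsilon$ forces every normal to within an arbitrarily small angle $\theta$ of $\pm v$. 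Third, a uniform scaling by a small factor $\delta$ leaves all normal directions untouched but shrinks everything into a ball of diameter $O(\delta)$; because the points sit in open cells of the arrangement, each keeps a positive distance from every hyperplane, and after scaling this separation margin is $\Theta(\delta)$.

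The technical core is realizing a single (rescaled) halfspace $h_i$ by a translate of $C$. I work near the tangent point $x_1$ whose outer normal is $v$. As the supporting hyperplane there meets $C$ only in $x_1$ and $C$ is smooth on a neighborhood $U$ of $x_1$, the boundary piece $\partial C\cap U$ is the graph of a smooth function $z=f(u)$ over the tangent directions $u$, with $f(0)=0$ and $\nabla f(0)=0$. Because $x_1$ is exposed, the touching point of the supporting hyperplane with outer normal $w$ tends to $x_1$ as $w\to v$, so the tangent normals realized on $\partial C\cap U$ cover a full neighborhood of $v$; equivalently, $\nabla f$ attains every sufficiently small slope. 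Given $h_i$ with normal within $\theta$ of $v$, write its bounding hyperplane as a graph $z=\ell_i(u)$ of small slope $g_i$, pick $u^\ast$ with $\nabla f(u^\ast)=g_i$, and translate $C$ so that the cap point above $u^\ast$ sits at the correct height over the tiny point set. Over a footprint of diameter $O(\delta)$ the cap then differs from the hyperplane $\ell_i$ only by a first-order error $o(\delta)$; since $C$ is full-dimensional, points below the cap lie in the bulk of $C$ and points above lie outside it. Halfspaces whose normal points towards $-v$ are treated identically using the second smooth tangent point $x_2$ supplied by \curved{}ness.

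The main obstacle is precisely this local approximation, and it hinges on two quantitative facts. The first is that the tangent normals on the smooth cap around $x_1$ sweep out an entire neighborhood of $v$, so that every small slope $g_i$ is actually realized by some $u^\ast$; I would deduce this from $x_1$ being an exposed point together with smoothness on a neighborhood of $x_1$. The second is the comparison of scales: the curvature-induced error between the cap and the flat hyperplane is $o(\delta)$, whereas the separation margin is $\Theta(\delta)$, so shrinking the configuration (small $\delta$) renders the curvature negligible at the scale of the point set and the translate of $C$ reproduces $h_i\cap P$ exactly. With both halfspaces of every twin pair thus converted, the resulting translates of $C$ together with $P$ form a geometric representation of $H$.
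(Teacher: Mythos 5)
Your proposal follows the paper's proof essentially step for step: both start from the stretched, twinned hyperplane arrangement, apply an affine transformation that brings every bounding normal within a small angle of $\pm v$ and shrinks the whole configuration, and then replace each halfspace by a translate of $C$ supported along the bounding hyperplane near the two \curved tangency points, with correctness resting on the same scale comparison — the curvature-induced deviation of the cap from the hyperplane is sublinear in the scale while the separation margins of the points are linear (the paper encodes this as an explicit ``thick slab'' requirement on the scaled copy $C^f$, with the ball of radius $10\sqrt{d}$ and depths $\alpha/10\leq\lambda\leq 10\sqrt{d}$, rather than your $o(\delta)$ versus $\Theta(\delta)$ bookkeeping, and it also uses thickness of $C^f$ where you invoke full-dimensionality to absorb points far beyond the twin hyperplane). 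The one caveat worth noting is that you assume $\partial C$ is smooth on a whole neighborhood of $x_1$ so that $\nabla f$ exactly attains every small slope, which is formally stronger than the definition's smoothness at the tangency points alone — but the paper's own proof leans on the closely analogous asserted fact that every unit vector $w$ with $\lVert w-v\rVert_2\leq\varepsilon$ still certifies that $C$ is \curved, so this soft spot is shared rather than introduced by you.
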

The full proof of this lemma can be found in 
\ifdefined\arxiv
\Cref{app:approximateHalfspaces}. 
\else
the full version of this paper.
\fi
The idea behind the proof is that a stretching \A' of \A can be scaled  and stretched in such a way that every hyperplane has a normal vector close to the vector $v$ witnessing that $C$ is \curved, and such that all the vertices lie within some sufficiently small box. Then, for every halfspace $h_\pm$ bounded by some hyperplane $h$ in \A', there exists a translate of $C$ which approximates $h_\pm$ within the small box. This intuition is shown in \Cref{fig:largeness_intuition}. Since the hyperplane arrangement is simple, and there is some slack between the hyperplanes bounding the two twin halfspaces (as we argued above in the proof of \Cref{thm:halfspaces}), such an approximation is sufficient. 

\begin{figure}[htb]
    \centering
    \includegraphics[page=2,scale=1]{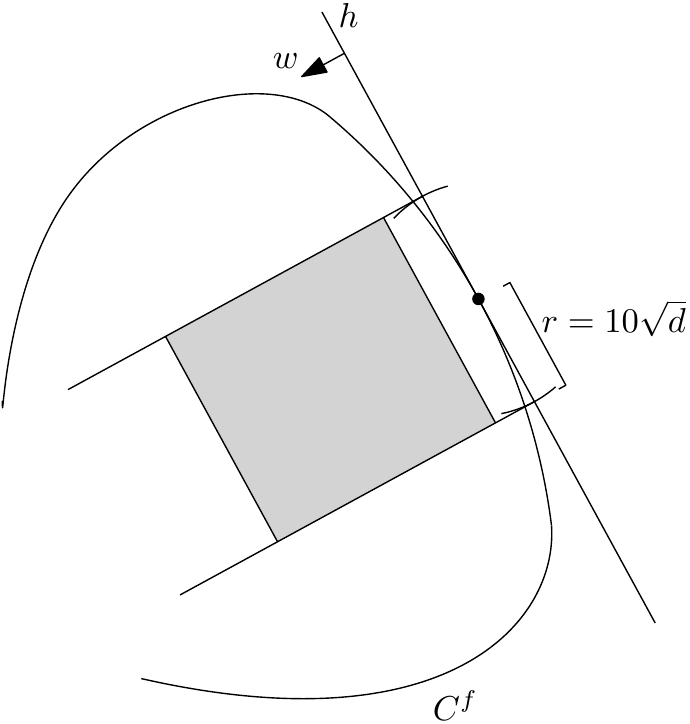}
    \caption{Illustration for the proof of \cref{lem:if}. Within the small box (dark grey), the translate of $C$ (green) approximates the halfspace (light grey) bounded by~$h$.}
    \label{fig:largeness_intuition}
\end{figure}

\begin{lemma}\label[lemma]{lem:onlyif}
	If the hypergraph $H$ is representable by translates of $C$, then 
	\A is stretchable.
\end{lemma}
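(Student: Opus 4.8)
The plan is to mirror the reverse direction of the proof of \Cref{thm:halfspaces}, replacing the boundary hyperplanes of the halfspaces by the separating hyperplanes guaranteed by the \separable property. Given a representation $(P,S)$ of $H$ by translates of $C$, for each \PH $\ell_i$ of \A{} let $C_i$ and $C_i'$ be the translates representing the hyperedges of $\ell_i$ and of its twin $\ell_i'$, respectively. Since $C$ is \separable, there is a hyperplane $p_i$ that strictly separates $C_i\setminus C_i'$ from $C_i'\setminus C_i$. I claim that $\{p_i\}_i$ is a stretching of \A{}, which proves the lemma.

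First I would translate the combinatorics of the construction into containment statements. By construction a vertex of $H$ lies in the hyperedge of $\ell_i$ exactly when it lies on the $\ell_i'$-side of $\ell_i$, and in the hyperedge of $\ell_i'$ exactly when it lies on the $\ell_i$-side of $\ell_i'$. Hence a point of $P$ lies in $C_i\setminus C_i'$ iff it lies strictly beyond the twin $\ell_i'$ (call this the far-positive side), and in $C_i'\setminus C_i$ iff it lies strictly before $\ell_i$ (the far-negative side), while the thin strip between $\ell_i$ and $\ell_i'$ falls into $C_i\cap C_i'$. Consequently $p_i$ strictly separates the far-positive points of $\ell_i$ from its far-negative points.

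Next I would carry out the local analysis at each vertex $q$ of \A{}, which by simplicity is the intersection of exactly $d$ \PH{s}, say $\ell_1,\dots,\ell_d$. Twinning produces $3^d$ cells of $\A'$ clustered around $q$, each holding one point of $P$; I index these by $(a_1,\dots,a_d)\in\{-,0,+\}^d$ according to their side of each pair $(\ell_s,\ell_s')$. I would focus on the $2^d$ corner points having all $a_s\in\{+,-\}$: the corner with sign pattern $\sigma$ is far-positive or far-negative for $\ell_s$ according to $\sigma_s$, hence lies in $C_s\setminus C_s'$ or $C_s'\setminus C_s$ accordingly, i.e. on side $\sigma_s$ of $p_s$ for every $s$. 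Realizing all $2^d$ sign patterns forces the normals of $p_1,\dots,p_d$ to be linearly independent (an arrangement of $d$ hyperplanes with dependent normals has fewer than $2^d$ cells), so $p_1,\dots,p_d$ meet in a unique point $r(q)$, the realization of $q$; and having one corner point in each orthant around $r(q)$ places $r(q)$ in the interior of their convex hull.

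Finally I would verify the two conditions defining a stretching. If $\ell_k$ passes through $q$, then $r(q)\in p_k$ by definition of $r(q)$, matching $q\in\ell_k$. If $\ell_k$ does not pass through $q$, then—because the twin $\ell_k'$ is chosen close to $\ell_k$, so its strip stays away from $q$—all $3^d$ cluster points lie on a single side of both $\ell_k$ and $\ell_k'$, hence all in $C_k\setminus C_k'$ or all in $C_k'\setminus C_k$, and therefore all on the same side of $p_k$; since $r(q)$ lies in the convex hull of the corner points, it lies on that same (correct) side. Thus every vertex of \A{} is realized on the correct side of every hyperplane, so $\{p_i\}_i$ is a stretching and \A{} is stretchable. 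The step I expect to be the main obstacle is precisely this vertex realization: \separable{}ness constrains only the far points and says nothing about the strip, so I must promote it into both the existence of the intersection point $r(q)$ and its correct placement relative to the remaining hyperplanes, which is exactly what the convex-hull-of-corners argument is designed to deliver.
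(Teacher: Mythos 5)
Your proof is correct and follows the same skeleton as the paper's: you take the separating hyperplanes $p_i$ supplied by the \separable property, use the $2^d$ corner points (one per sign pattern in $\{+,-\}^d$, placed in distinct cells of the sub-arrangement of $p_1,\dots,p_d$ by the far-positive/far-negative containments) to force the normals of $p_1,\dots,p_d$ to be linearly independent and hence a unique realization point $r(q)$, and then place $r(q)$ on the correct side of every remaining separator. The only genuine divergence is that last placement step. The paper argues by contradiction: assuming $I_p$ lies on the wrong side of $p'$, it forms the cone $S$ spanned by the $d$ rays (one on each intersection line of $d-1$ of the hyperplanes) not crossed by $p'$, notes that $S$ then avoids $h_l$, and contradicts the fact that the corner point in the cell corresponding to $S$ lies in $C_l'\setminus C_r'\subseteq h_l$. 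You instead argue directly: one corner point in each of the $2^d$ open orthants around $r(q)$ forces $r(q)$ into the convex hull of these points (for any nonzero linear functional vanishing at $r(q)$, the orthant whose signs oppose the functional's coefficients contains a point where the functional is strictly negative, so no separating functional exists), and since all corner points lie strictly in one open halfspace of $p_k$ and open halfspaces are convex, $r(q)$ lies strictly on that same side. This convex-hull-of-orthant-points step is a clean standard fact and arguably simpler than the paper's cone construction; both arguments consume exactly the same input from the reduction, namely that when $\ell_k$ misses $q$ all $3^d$ cluster points lie in $C_k\setminus C_k'$ or all in $C_k'\setminus C_k$ (which the paper likewise justifies only ``by construction''). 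One nitpick: you assert the orthant-to-hull implication without proof; it merits the one-line functional argument above, though this matches the paper's own level of detail for the analogous steps.
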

\begin{proof}
	Assume $H$ is representable. 
	By construction, the two translates $C_{i,r},C_{i,l}$ of $C$ corresponding to the two hyperedges of each \PH $\ell_i$ must intersect as they contain at least one common point. We call their convex intersection the \emph{lens} of this \PH.
	For each \PH $\ell_i$ of~$\A$, we consider some hyperplane $p_i$ which separates $C_{i,r}\setminus C_{i,l}$ from $C_{i,l}\setminus C_{i,r}$. Such a hyperplane exists since $C$ is \separable.
	Let $\mathcal P:=\{p_i\}_i$  be the hyperplane arrangement consisting of all these separators. 
	We aim to show that $\mathcal P$ is a stretching of $\A$. 
	
	To this end, consider $d$ pseudohyperplanes $\ell_1,\ldots,\ell_{d}$ which intersect in $\A$. 
	\Cref{fig:stretcha} displays the case $d=2$.
	Furthermore, consider one more pseudohyperplane $\ell'$, and let $p'$, $C_r'$, $C_l'$ denote the separator hyperplane and translates of $C$ corresponding to $\ell'$. We show that the intersection $I_p:=p_1\cap \ldots\cap p_d$ is a single point which lies on the same side of $p'$ as the point $I_\ell:=\ell_1\cap\ldots\cap\ell_d$ lies of $\ell'$.

	\begin{figure}[htb]
		\centering
		\begin{subfigure}[t]{.48\textwidth}
			\centering
			\includegraphics[page=3]{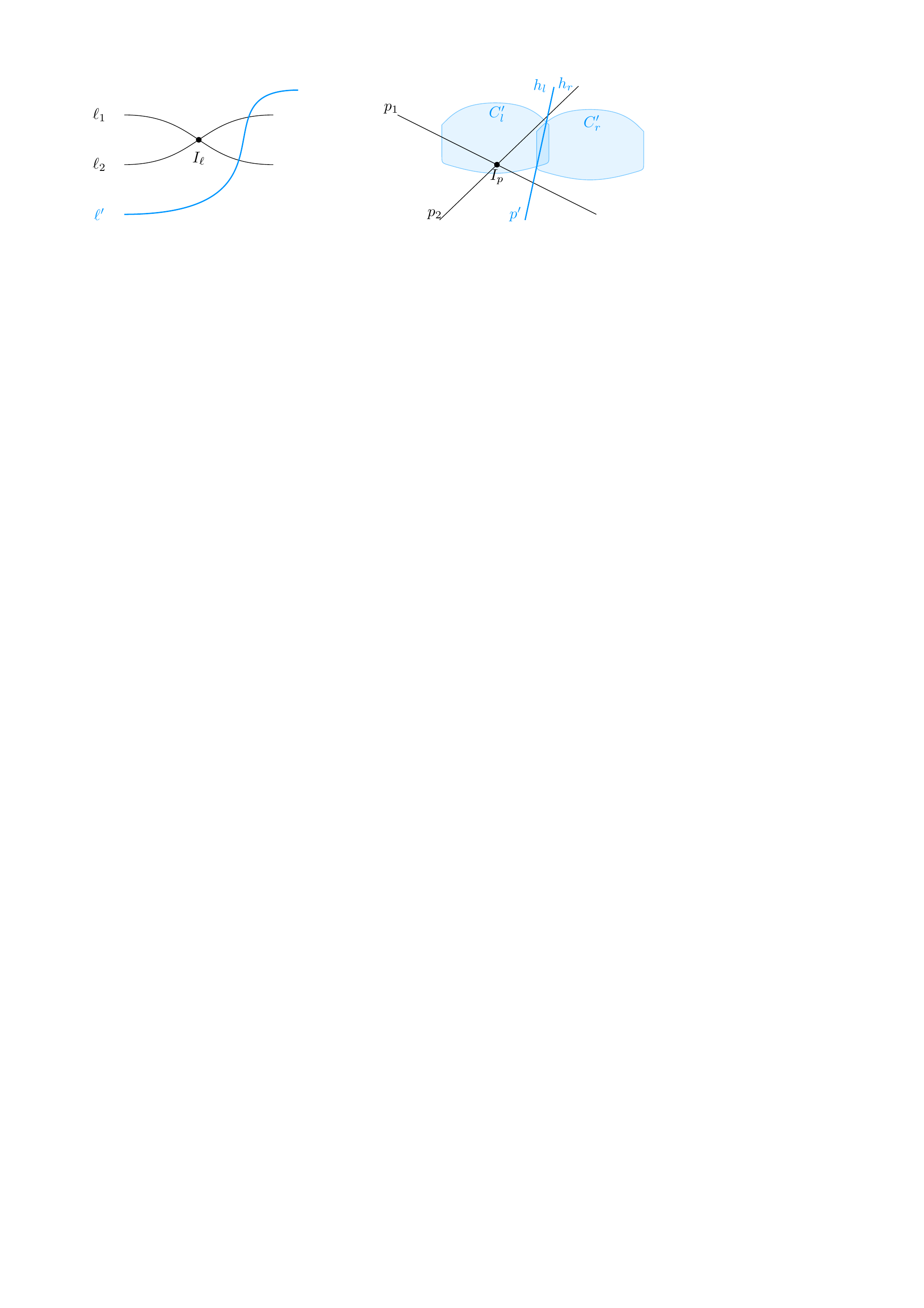}
			\caption{Pseudohyperplanes $\ell_1,\ell_2,\ell'$ in~$\A$}
			\label{fig:stretchaA}
		\end{subfigure}\hfill
		\begin{subfigure}[t]{.5\textwidth}
			\centering
			\includegraphics[page=4]{figures/thm2stretcha.pdf}
			\caption{Hyperplanes $p_1,p_2,p'$ in $\mathcal P$.}
			\label{fig:stretchaB}
		\end{subfigure}
		\caption{Illustration for the proof of \Cref{lem:onlyif} for $d=2$. Some pseudohyperplanes in \A and their corresponding hyperplanes in $\mathcal P$.}
		\label{fig:stretcha}
	\end{figure}
	
	The hyperplane $p'$ divides the space into two halfspaces $h_r$ and $h_l$ such that $C_r' \backslash C_l'\subseteq h_r$ and $C_l' \backslash C_r'\subseteq h_l$.
	By construction,  the two hyperedges defined for $\ell'$ cover all vertices of $H$ and the vertices in the cells around $I_\ell$  belong to only one hyperedge.
	Suppose without loss of generality that these vertices only belong to the hyperedge represented by $C_l'$.
	We will show that the intersection $I_p$ must then be a point in $h_l$.
	
	We first show that the intersection $I_p$ is a point, i.e., $0$-dimensional. Consider all $2^d$ $d$-dimensional cells of \A around $I_\ell$. The construction of $H$ implies that each such cells contains a distinct point, and these points must all lie in distinct cells of the sub-arrangement of the involved hyperplanes $p_1,\ldots, p_d$. Assuming that $I_p$ is not a single point, this sub-arrangement is not simple, and the hyperplanes divide space into strictly fewer than $2^d$ cells, which results in a contradiction.

	Next we prove that $I_p$ is in $h_l$. Assume towards a contradiction that $I_p\in h_r$, see also \Cref{fig:stretcha2}.
 	Consider the $d$ lines that are formed by the intersections of subsets of $d-1$ hyperplanes among $p_1,...,p_{d}$. Each of these lines is the union of two rays beginning at $I_p$. Observe that the hyperplane $p'$ can only intersect one of the two rays forming each line. 
	Let $S$ be the convex cone centered at $I_p$ defined by the $d$ non-intersected rays. Observe that $S$ does not intersect $p'$, so $S$ must be fully contained in $h_r$, i.e., $S \cap h_l = \emptyset$. Note, however, by the construction of the hypergraph, there must be a point that lies in $S \cap (C_l'\setminus C_r')\subseteq S\cap h_l$, which is a contradiction.

 	\begin{figure}[htb]
		\centering
		\includegraphics[page=2]{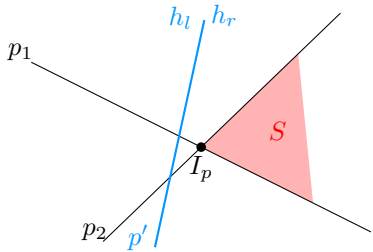}
		\caption{Illustration for the proof of \Cref{lem:onlyif} for $d=2$. The cone $S$ must intersect $C'_l\setminus C'_r$, which contradicts $I_p$ lying in $h_r$.
		}
		\label{fig:stretcha2}
	\end{figure}

	We conclude that $\mathcal P$ is a stretching of \A, and thus \A is stretchable. \qed
\end{proof}

\Cref{lem:if,lem:onlyif} combined now prove hardness of \Recognition{}(\F) for the family \F of translates of $C$. This completes the proof of \Cref{thm:translates}.

\section{Future directions}

We conclude with a list of interesting open problems:
As mentioned above, we are not aware of interesting
families of \curved and \separable sets in higher dimensions beyond balls and ellipsoids.
The families of translates of a given polygon show the
need for some curvature 
in order to show \ER-hardness.
We wonder if it is sufficient for \ER-hardness to assume 
curvature at only one boundary part instead of two opposite ones.
Another open question is to consider families that
include rotated copies or homothetic copies of a fixed geometric object.
Allowing for rotation, it is conceivable that \ER-hardness even holds for polygons.
Allowing for homothetic copies, the next natural question would be to study the family of all balls, not just balls of the same size.
%\begin{itemize}
%	\item relaxing bicurved to 'curved'
%	\item polygons (or anything) with rotations
%	\item characterizing the sets of our properties
%\end{itemize}

\begin{comment}
\end{comment}

\small
%\linda{ todo: arxiv}
\bibliographystyle{splncs04}
%\bibliography{literature,ETR}

\bibliography{literature,ETR}

\begin{thebibliography}{10}
\providecommand{\url}[1]{\texttt{#1}}
\providecommand{\urlprefix}{URL }
\providecommand{\doi}[1]{https://doi.org/#1}

\bibitem{Abrahamsen2022_Covering}
Abrahamsen, M.: Covering polygons is even harder. In: IEEE 62nd Annual
  Symposium on Foundations of Computer Science (FOCS 2022). pp. 375--386
  (2022). \doi{10.1109/FOCS52979.2021.00045}

\bibitem{Abrahamsen2018_ArtGallery}
Abrahamsen, M., Adamaszek, A., Miltzow, T.: The art gallery problem is
  {$\exists\mathbb{R}$}-complete. In: Proceedings of the 50th Annual ACM SIGACT
  Symposium on Theory of Computing (STOC 2018). pp. 65--73 (2018).
  \doi{10.1145/3188745.3188868}

\bibitem{Abrahamsen2021NeuralNetworks}
Abrahamsen, M., Kleist, L., Miltzow, T.: Training neural networks is
  $\exists\mathbb{R}$-complete. In: Advances in Neural Information Processing
  Systems (NeurIPS 2021). vol.~34, pp. 18293--18306 (2021),
  \url{https://proceedings.neurips.cc/paper_files/paper/2021/file/9813b270ed0288e7c0388f0fd4ec68f5-Paper.pdf}

\bibitem{abrahamsenComplexes}
Abrahamsen, M., Kleist, L., Miltzow, T.: Geometric embeddability of complexes
  is {$\exists\mathbb R$}-complete (2023). \doi{10.48550/arXiv.2108.02585}, to
  appear in: Symposium on Computational Geometry (SOCG 2023)

\bibitem{Abrahamsen2020_Framework}
Abrahamsen, M., Miltzow, T., Seiferth, N.: Framework for
  $\exists\mathbb{R}$-completeness of two-dimensional packing problems. In:
  IEEE 61st Annual Symposium on Foundations of Computer Science (FOCS 2020).
  pp. 1014--1021 (2020). \doi{10.1109/FOCS46700.2020.00098}

\bibitem{enets4}
Aronov, B., Ezra, E., Sharir, M.: Small-size $\epsilon$-nets for axis-parallel
  rectangles and boxes. SIAM Journal on Computing (SiComp)  \textbf{39}(7),
  3248--3282 (2010). \doi{10.1137/090762968}

\bibitem{axenovichHomothets}
Axenovich, M., Ueckerdt, T.: Density of range capturing hypergraphs. Journal of
  Computational Geometry (JoCG)  \textbf{7}(1) (2016).
  \doi{10.20382/jocg.v7i1a1}

\bibitem{Berthelsen2019MultiPlayerNash}
Berthelsen, M.L.T., Hansen, K.A.: On the computational complexity of problems
  about multi-player {Nash} equilibria. In: International Symposium on
  Algorithmic Game Theory. LNCS, vol. 11801, pp. 153--167 (2019).
  \doi{10.1007/978-3-030-30473-7_11}

\bibitem{2022trainFull}
Bertschinger, D., Hertrich, C., Jungeblut, P., Miltzow, T., Weber, S.: Training
  fully connected neural networks is {$\exists \mathbb{R}$}-complete (2022).
  \doi{10.48550/arXiv.2204.01368}

\bibitem{Bilo2016_Nash}
Bil{\`o}, V., Mavronicolas, M.: A catalog of $\exists\mathbb{R}$-complete
  problems about {Nash} equilibria in multi-player games. In: 33rd Symposium on
  Theoretical Aspects of Computer Science (STACS 2016). pp. 17:1--17:13.
  (LIPIcs) (2016). \doi{10.4230/LIPIcs.STACS.2016.17}

\bibitem{Bilo2017_SymmetricNash}
Bil{\`o}, V., Mavronicolas, M.: $\exists\mathbb{R}$-complete decision problems
  about symmetric {Nash} equilibria in symmetric multi-player games. In: 34th
  Symposium on Theoretical Aspects of Computer Science (STACS 2017). (LIPIcs),
  vol.~66, pp. 13:1--13:14 (2017). \doi{10.4230/LIPIcs.STACS.2017.13}

\bibitem{boothInterval}
Booth, K.S., Lueker, G.S.: Testing for the consecutive ones property, interval
  graphs, and graph planarity using pq-tree algorithms. Journal of Computer and
  System Sciences  \textbf{13}(3),  335--379 (1976).
  \doi{10.1016/S0022-0000(76)80045-1}

\bibitem{vcadek2014computing}
{\v{C}}adek, M., Kr{\v{c}}{\'a}l, M., Matou{\v{s}}ek, J., Sergeraert, F.,
  Vok{\v{r}}{\'\i}nek, L., Wagner, U.: Computing all maps into a sphere. J.~ACM
   \textbf{61}(3),  1--44 (2014). \doi{10.1145/2597629}

\bibitem{cadek2014polynomial}
{\v{C}}adek, M., Kr{\v{c}}{\'a}l, M., Matou{\v{s}}ek, J., Vok{\v{r}}{\'\i}nek,
  L., Wagner, U.: Time computation of homotopy groups and {P}ostnikov systems
  in fixed dimension. SIAM J. Comput.  \textbf{43}(5),  1728--1780 (2014).
  \doi{10.1137/120899029}

\bibitem{vcadek2017algorithmic}
{\v{C}}adek, M., Kr{\v{c}}{\'a}l, M., Vok{\v{r}}{\'\i}nek, L.: Algorithmic
  solvability of the lift\-ing-extension problem. Discrete \& Computational
  Geometry (DCG)  \textbf{57}(4),  915--965 (2017).
  \doi{10.1007/s00454-016-9855-6}

\bibitem{Cardinal2018_Intersection}
Cardinal, J., Felsner, S., Miltzow, T., Tompkins, C., Vogtenhuber, B.:
  Intersection graphs of rays and grounded segments. Journal of Graph
  Algorithms and Applications  \textbf{22}(2),  273--294 (2018).
  \doi{10.7155/jgaa.00470}

\bibitem{Chistikov2016_Matrix}
Chistikov, D., Kiefer, S., Marusic, I., Shirmohammadi, M., Worrell, J.: On
  restricted nonnegative matrix factorization. In: 43rd International
  Colloquium on Automata, Languages, and Programming (ICALP 2016). LIPIcs,
  vol.~55, pp. 103:1--103:14 (2016). \doi{10.4230/LIPIcs.ICALP.2016.103}

\bibitem{deyPachExtremal}
Dey, T.K., Pach, J.: Extremal problems for geometric hypergraphs. Discrete {\&}
  Computational Geometry  \textbf{19}(4),  473--484 (Apr 1998).
  \doi{10.1007/PL00009365}

\bibitem{Dobbins2019_NestedPolytopes}
Dobbins, M.G., Holmsen, A., Miltzow, T.: A universality theorem for nested
  polytopes (2019). \doi{10.48550/arXiv.1908.02213}

\bibitem{dobbins_AreaUniversality_Journal}
Dobbins, M.G., Kleist, L., Miltzow, T., Rz\k{a}\.zewski, P.: Completeness for
  the complexity class $\forall\exists\mathbb{R}$ and area-universality.
  Discrete \& Computational Geometry (DCG)  (2022).
  \doi{10.1007/s00454-022-00381-0}

\bibitem{Erickson2019_CurveStraightening}
Erickson, J.: Optimal curve straightening is $\exists\mathbb{R}$-complete
  (2019). \doi{10.48550/arXiv.1908.09400}

\bibitem{Erickson2022_SmoothingGap}
Erickson, J., van~der Hoog, I., Miltzow, T.: Smoothing the gap between {NP} and
  $\exists\mathbb{R}$. In: IEEE 61st Annual Symposium on Foundations of
  Computer Science (FOCS 2020). pp. 1022--1033 (2020).
  \doi{10.1109/FOCS46700.2020.00099}

\bibitem{evans}
Evans, W., Rz{\k{a}}{\.{z}}ewski, P., Saeedi, N., Shin, C.S., Wolff, A.:
  Representing graphs and hypergraphs by touching polygons in {3D}. In: Graph
  Drawing and Network Visualization (GD 2019). pp. 18--32 (2019).
  \doi{10.1007/978-3-030-35802-0\textunderscore 2}

\bibitem{Garg2018_MultiPlayer}
Garg, J., Mehta, R., Vazirani, V.V., Yazdanbod, S.:
  {$\exists\mathbb{R}$}-completeness for decision versions of multi-player
  (symmetric) {Nash} equilibria. ACM Transactions on Economics and Computation
  \textbf{6}(1),  1:1--1:23 (2018). \doi{10.1145/3175494}

\bibitem{haussler1986epsilon}
Haussler, D., Welzl, E.: $\varepsilon$-nets and simplex range queries. Discrete
  \& Computational Geometry p. 127–151 (1987). \doi{10.1007/BF02187876}

\bibitem{unitSegER}
Hoffmann, M., Miltzow, T., Weber, S., Wulf, L.: Recognition of unit segment
  graphs is $\exists\mathbb{R}$-complete, unpublished, in preparation.

\bibitem{HausDorff}
Jungeblut, P., Kleist, L., Miltzow, T.: The complexity of the {H}ausdorff
  distance. In: 38th International Symposium on Computational Geometry (SoCG
  2022). LIPIcs, vol.~224, pp. 48:1--48:17 (2022).
  \doi{10.4230/LIPIcs.SoCG.2022.48}

\bibitem{Kang2012}
Kang, R.J., M{\"u}ller, T.: Sphere and dot product representations of graphs.
  Discrete {\&} Computational Geometry  \textbf{47}(3),  548--568 (2012).
  \doi{10.1007/s00454-012-9394-8}

\bibitem{Kratochvil1994_IntersectionGraphs}
Kratochv{\'\i}l, J., Matou{\v{s}}ek, J.: Intersection graphs of segments.
  Journal of Combinatorial Theory, Series B  \textbf{62}(2),  289--315 (1994).
  \doi{10.1006/jctb.1994.1071}

\bibitem{loogesUnitInterval}
Looges, P.J., Olariu, S.: Optimal greedy algorithms for indifference graphs.
  Computers \& Mathematics with Applications  \textbf{25}(7),  15--25 (1993).
  \doi{10.1016/0898-1221(93)90308-I}

\bibitem{Lubiw2018_DrawingInPolygonialRegion}
Lubiw, A., Miltzow, T., Mondal, D.: The complexity of drawing a graph in a
  polygonal region. In: Graph Drawing and Network Visualization (GD 2018).
  LNCS, vol. 11282, pp. 387--401 (2018). \doi{10.1007/978-3-030-04414-5_28}

\bibitem{ma2000phd}
Ma, L.: Bisectors and Voronoi Diagrams for Convex Distance Functions. Ph.D.
  thesis, FernUniversität Hagen (2000),
  \url{https://ub-deposit.fernuni-hagen.de/receive/mir_mods_00000857}

\bibitem{matouvsek2018embeddability}
Matou{\v{s}}ek, J., Sedgwick, E., Tancer, M., Wagner, U.: Embeddability in the
  3-sphere is decidable. J.~ACM  \textbf{65}(1),  1--49 (2018).
  \doi{10.1145/2582112.2582137}

\bibitem{matousek2011hardness}
Matou{\v{s}}ek, J., Tancer, M., Wagner, U.: Hardness of embedding simplicial
  complexes in $\mathbb{R}^d$. JEMS  \textbf{13}(2),  259--295 (2011).
  \doi{10.4171/JEMS/252}

\bibitem{enets3}
Matou\v{s}ek, J., Seidel, R., Welzl, E.: How to net a lot with little: Small
  $\epsilon$-nets for disks and halfspaces. In: Sixth Annual Symposium on
  Computational Geometry (SoCG 1990). p. 16–22 (1990).
  \doi{10.1145/98524.98530}

\bibitem{mcdiarmid2013integer}
McDiarmid, C., M{\"u}ller, T.: Integer realizations of disk and segment graphs.
  Journal of Combinatorial Theory, Series B  \textbf{103}(1),  114--143 (2013).
  \doi{10.1016/j.jctb.2012.09.004}

\bibitem{mesmay2020embeddability}
Mesmay, A.d., Rieck, Y., Sedgwick, E., Tancer, M.: Embeddability in {$\mathbb
  R^3$} is {NP}-hard. J.~ACM  \textbf{67}(4),  20:1--20:29 (2020).
  \doi{10.1145/3396593}

\bibitem{Miltzow2022_ContinuousCSP}
Miltzow, T., Schmiermann, R.F.: On classifying continuous constraint
  satisfaction problems. In: IEEE 62nd Annual Symposium on Foundations of
  Computer Science (FOCS 2021). pp. 781--791 (2022).
  \doi{10.1109/FOCS52979.2021.00081}

\bibitem{Mnev1988_UniversalityTheorem}
Mn{\"{e}}v, N.E.: The universality theorems on the classification problem of
  configuration varieties and convex polytopes varieties. In: Topology and
  Geometry — Rohlin Seminar, LNM, vol.~1346, pp. 527--543. Springer (1988).
  \doi{10.1007/BFb0082792}

\bibitem{pachEpsilonNets2}
Pach, J., Tardos, G.: Tight lower bounds for the size of epsilon-nets. Journal
  of the American Mathematical Society  \textbf{26}(3),  645--658 (2013).
  \doi{10.1090/S0894-0347-2012-00759-0}

\bibitem{pachEpsilonNets1}
Pach, J., Woeginger, G.: Some new bounds for epsilon-nets. In: Sixth Annual
  Symposium on Computational Geometry (SoCG 1990). p. 10–15 (1990).
  \doi{10.1145/98524.98529}

\bibitem{Richter1995_Polytopes}
Richter-Gebert, J., Ziegler, G.M.: Realization spaces of 4-polytopes are
  universal. Bulletin of the American Mathematical Society  \textbf{32}(4),
  403--412 (1995). \doi{10.1090/S0273-0979-1995-00604-X}

\bibitem{schaeferConvex}
Schaefer, M.: Complexity of some geometric and topological problems. In: Graph
  Drawing. pp. 334--344. Springer (2010). \doi{10.1007/978-3-642-11805-0}

\bibitem{Schaefer2013_Realizability}
Schaefer, M.: Realizability of Graphs and Linkages, pp. 461--482. Thirty Essays
  on Geometric Graph Theory, Springer (2013).
  \doi{10.1007/978-1-4614-0110-0_24}

\bibitem{Schaefer2021_FixedK}
Schaefer, M.: Complexity of geometric k-planarity for fixed k. Journal of Graph
  Algorithms and Applications  \textbf{25}(1),  29--41 (2021).
  \doi{10.7155/jgaa.00548}

\bibitem{schaefer2003recognizing}
Schaefer, M., Sedgwick, E., {\v{S}}tefankovi{\v{c}}, D.: Recognizing string
  graphs in {NP}. Journal of Computer and System Sciences  \textbf{67}(2),
  365--380 (2003). \doi{10.1016/S0022-0000(03)00045-X}

\bibitem{Schaefer2017_FixedPointsNash}
Schaefer, M., {\v{S}}tefankovi{\v{c}}, D.: Fixed points, {N}ash equilibria, and
  the existential theory of the reals. Theory of Computing Systems
  \textbf{60},  172--193 (2017). \doi{10.1007/s00224-015-9662-0}

\bibitem{Schaefer2018_TensorRank}
Schaefer, M., \v{S}tefankovi\v{c}, D.: The complexity of tensor rank. Theory of
  Computing Systems  \textbf{62}(5),  1161--1174 (2018).
  \doi{10.1007/s00224-017-9800-y}

\bibitem{Shitov2016_MatrixFactorizations}
Shitov, Y.: A universality theorem for nonnegative matrix factorizations
  (2016). \doi{10.48550/arXiv.1606.09068}

\bibitem{Shitov2017_PSMatrixFactorization}
Shitov, Y.: The complexity of positive semidefinite matrix factorization. SIAM
  Journal on Optimization  \textbf{27}(3),  1898--1909 (2017).
  \doi{10.1137/16M1080616}

\bibitem{Shor1991_Stretchability}
Shor, P.W.: Stretchability of pseudolines is {NP}-hard. In: Gritzmann, P.,
  Sturmfels, B. (eds.) Applied Geometry And Discrete Mathematics. DIMACS Series
  in Discrete Mathematics and Theoretical Computer Science, vol.~4, pp.
  531--554 (1991). \doi{10.1090/dimacs/004/41}

\bibitem{Skopenkov2020}
Skopenkov, A.: Extendability of simplicial maps is undecidable (2020).
  \doi{10.48550/arXiv.2008.00492}

\bibitem{smorodinsky}
Smorodinsky, S.: On the chromatic number of geometric hypergraphs. SIAM Journal
  of Discrete Mathematics  \textbf{21}(3),  676--687 (2007).
  \doi{10.1137/050642368}

\bibitem{spinrad1994recognition}
Spinrad, J.: Recognition of circle graphs. Journal of Algorithms
  \textbf{16}(2),  264--282 (1994). \doi{10.1006/jagm.1994.1012}

\bibitem{stade2022complexity}
Stade, J.: Complexity of the boundary-guarding art gallery problem (2022).
  \doi{10.48550/arXiv.2210.12817}

\bibitem{tanenbaum1995Halfspaces}
Tanenbaum, P.J., Goodrich, M.T., Scheinerman, E.R.: Characterization and
  recognition of point-halfspace and related orders. In: Graph Drawing. pp.
  234--245. Springer (1995). \doi{10.1007/3-540-58950-3_375}

\bibitem{tunccel2022computational}
Tun{c}el, L., Vavasis, S., Xu, J.: Computational complexity of decomposing a
  symmetric matrix as a sum of positive semidefinite and diagonal matrices
  (2022). \doi{10.48550/arXiv.2209.05678}

\end{thebibliography}

\ifdefined\arxiv

\clearpage
\normalsize
\appendix

\section{Details of membership, \Cref{lemma:memberHalfspaces,lemma:memberTranslates}}\label{app:membership}
\memberHalfspaces*
\begin{proof}
	For a given hypergraph $H$, we formulate an \ETR formula as follows.
	For each vertex/point, we create variables $p = (p_1,\ldots,p_d)$ to represent the point.
	Similarly, for each hyperedge/halfspace, we create variables $h= (h_1,\ldots,h_{d+1})$ to represent the coefficients of the halfspace.
	Then for each point $p$ that is supposed 
	to be in some halfspace $h$, we create the constraint:
	\[h_1p_1 + \ldots h_dp_d \leq h_{d+1}.\]
	Similarly, if $p$ is not contained in a halfspace $h$, we create 
	the constraint:
	\[h_1p_1 + \ldots h_dp_d > h_{d+1}.\]
	This is a valid \ETR sentence that is equivalent to 
	the representability of $H$.
	Note that for any fixed dimension $d$ the \ETR sentence is 
	of polynomial size.
 \qed
\end{proof}

Recall that the class \NP is usually described by the existence of a witness and a verification algorithm. 
The same characterization exists for \ER using a real verification algorithm.
Instead of the witness consisting of binary words of polynomial length, 
in addition a polynomial number of real-valued numbers are allowed as a witness.
Furthermore, in order to be able to use those real numbers, the verification algorithm is 
allowed to work on the so-called real RAM model of computation.
The real RAM allows arithmetic operations with real numbers in constant time~\cite{Erickson2022_SmoothingGap}.

\memberTranslates*
\begin{proof}
	We describe a real verification algorithm as
	mentioned above. 
	The witness consists of the (real) coordinates of the points representing the vertices and the coefficients of the translation vectors 
	representing the hyperedges.
	By definition of \representable, a verification algorithm can efficiently check if each point is contained in the correct sets.
 \qed
\end{proof}

\section{Details of \Cref{thm:hardnessofstretcha}}\label{app:stretcha}

\stretchability*

\begin{proof}
	We reduce from stretchability of simple pseudoline arrangements, which is \ER-hard as shown in~\cite{Mnev1988_UniversalityTheorem,Shor1991_Stretchability}.
	
	Consider a simple pseudoline arrangement $L$ in the $x_1x_2$-plane. We consider $d-2$ pairwise orthogonal hyperplanes $h_1,\ldots,h_{d-2}$ whose common intersection is the $x_1x_2$-plane; e.g., the hyperplanes defined $x_i=0$ for $i=3,\dots, d$. The intersection of these hyperplanes serves as a canvas in which we aim to embed $L$.
	We extend each pseudoline of $\ell$ to a pseudohyperplane $h_\ell$ by extending it orthogonally to all $h_1,\ldots,h_{d-2}$, see \Cref{fig:CanvasHyperplane}.
	
	\begin{figure}[htb]
		\centering
		\includegraphics[scale=.8]{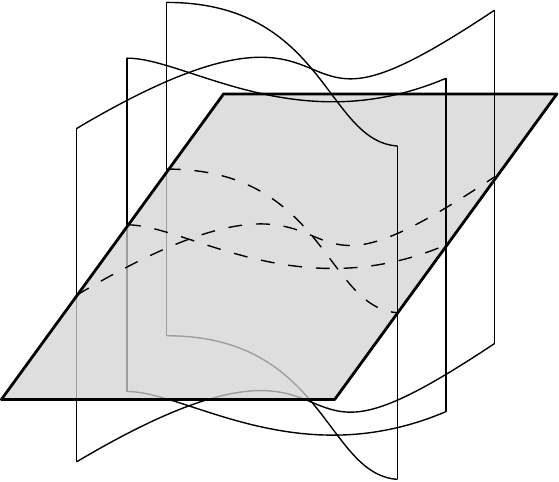}
		\caption{Extending a simple pseudoline arrangement (dashed) to a partial pseudohyperplane arrangement in $\R^3$. The grey hyperplane serves as the ``canvas''.}
		\label{fig:CanvasHyperplane}
	\end{figure}
	
	Clearly, the resulting pseudohyperplane arrangement \A can be built in polynomial time. Note that all intersection points of $d$ pseudohyperplanes in \A correspond to intersection points of $L$.
	
	If $L$ is stretchable, \A is clearly stretchable, as the above construction can be applied to the stretched line arrangement of $L$.
	
	If \A is stretchable, $L$ is stretchable, since restricting each hyperplane $h_\ell$ to the intersection of the hyperplanes $h_1,\ldots,h_{d-2}$ yields a line arrangement which is equivalent to $L$.
	
	As we have thus reduced stretchability of simple pseudoline arrangements to \mbox{\stretcha}, this concludes the proof.
 \qed
\end{proof}

\section{Details of \Cref{lem:if}}\label{app:approximateHalfspaces}
\approximateHalfspaces*
\begin{proof}
	We assume that \A is stretchable. We already proved in the previous section that thus there exists an arrangement of hyperplanes, in which we can create a twin of each hyperplane (with a tiny distance $\alpha$ between the twins), and in which we can place all the vertices of $H$ in the appropriate $d$-dimensional cells. If a vertex is placed between two twin hyperplanes, we assume it to be equidistant to them. As before, we denote this arrangement of hyperplanes and points by $\B'$.
	
	Let $v$ be the unit vector certifying that $C$ is \curved; recall the definition in \Cref{sec:introduction}. Because $C$ is smooth at the touching points of the tangent hyperplanes with normal vector~$v$, there exists $\epsilon>0$, such that any unit vector $w$ with $\lVert w-v \rVert_2\leq\varepsilon$ also fulfills the conditions to certify that $C$ is \curved.
	
	We now assume that $\B'$ fulfills the following properties: 
	\begin{enumerate}
		\item the normal vectors of all hyperplanes have distance at most $\varepsilon$ to $v$ or to $-v$
		\item every intersection point of $d$ hyperplanes as well as every point representing a vertex of~$H$, is contained in $[-1,1]^d$.
	\end{enumerate}
	Both properties can be achieved by applying some affine transformation with positive determinant, thus preserving the combinatorial structure of $\B'$.
	
	To represent the hyperedges of $H$, we will now use very large copies of $C$.
	Note that technically we are not allowed to scale $C$, but scaling $C$ by a factor $f$ is equivalent to scaling the arrangement by a factor $1/f$. Let $C^f$ be the set $C$ scaled by factor $f$.
	
	In order to determine the necessary scaling factor $f$, we consider the curvature of $C^f$ in all the points where the tangent hyperplanes of $C^f$ with normal vector~$w$ for~$\lVert w-v\rVert_2\leq \varepsilon$ intersect $C^f$. In each such tangent hyperplane $h$ with (unit) normal vector~$w$, we draw a $(d-1)$-ball $B$ of radius $10\sqrt{d}$ around the touching point $h\cap C^f$. Note that $10\sqrt{d}$ is larger than the length of any line segment contained in the box $[-1,1]^d$. Now,~$f$ has to be large enough such that $C^f$ contains every point $p+w\cdot \lambda$, for~$p\in B$ and $\alpha/10\leq \lambda\leq 10\sqrt{d}$.
	This ensures that the boundary of $C^f$ does not curve away from the tangent hyperplane too quickly, and that~$C^f$ is ``thick''. In other words, $C^f$ locally behaves like an only very slightly curved halfspace. See \Cref{fig:largeness} for an illustration of this requirement on $C^f$.
	
	\begin{figure}[htb]
		\centering
		\includegraphics{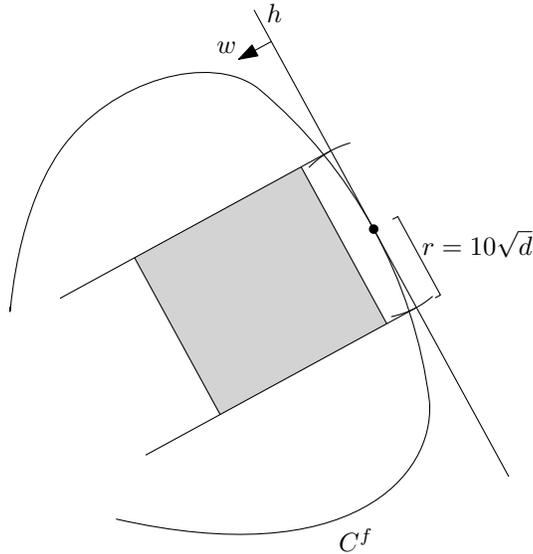}
		\caption{An illustration of the requirement on the scaling factor $f$. The set $C^f$ must contain the grey region.}
		\label{fig:largeness}
	\end{figure}
	
	We now replace each hyperplane $h$ of the arrangement $\B'$ by a translate $C_h^f$ of $C^f$, placed such that $h$ is a tangent hyperplane of $C_h^f$, the single point $h\cap C_h^f$ lies within the box $[-1,1]^d$, and $C_h^f$ lies completely to the side of $h$ containing its twin hyperplane. It remains to prove that $C_h^f$ contains exactly those points of $\B'$ which are on this side of $h$. Firstly, $C_h^f$ cannot contain more points, since $C_h^f$ is a subset of the halfspace delimited by $h$ containing its twin hyperplane. Second, we claim that $C_h^f$ contains all these points. To see this, note that within the box $[-1,1]^d$ containing all points, the boundary of $C_h^f$ is close enough to $h$ that it must contain all points between $h$ and its twin, since these points are located equidistant to the two hyperplanes. Furthermore, all points on the other side of the twin hyperplane are also contained in $C_h^f$ since within the box $[-1,1]^d$, the boundary $\delta(C_h^f)$ lies completely between $h$ and its twin hyperplane.
 \qed
\end{proof}

\fi
\end{document}